\documentclass{article}
\usepackage[preprint]{neurips_2018}
\usepackage[utf8]{inputenc} 
\usepackage[T1]{fontenc}    
\usepackage{hyperref}       
\usepackage{url}            
\usepackage{booktabs}       
\usepackage{amsfonts}       
\usepackage{bm}
\usepackage{nicefrac}       
\usepackage{microtype}      
\usepackage{wrapfig}
\usepackage{lipsum}
\usepackage{amsmath}
\usepackage{enumitem}
\usepackage{amsthm}
\usepackage{amssymb}
\usepackage{algorithm}
\usepackage{algpseudocode}
\usepackage{xcolor}
\usepackage{subfigure}
\newtheorem{theorem}{Theorem}
\newtheorem{proposition}{Proposition}
\newtheorem*{proposition_noindex}{Proposition}

\newtheorem{definition}{Definition}
\newtheorem{assumption}{Assumption}

\newcommand{\ymax}{y_{\max}}
\usepackage{color}

\usepackage{graphicx}

\title{No-Regret Learning in Cournot Games}

\author{%
	Yuanyuan Shi, Baosen Zhang\\
	Department of Electrical and Computer Engineering\\
	University of Washington\\
	\texttt{yyshi, zhangbao@uw.edu} \\
}

\begin{document}
	
	\maketitle

\begin{abstract}
This paper examines the convergence of no-regret learning in Cournot games with continuous actions. Cournot games are the essential model for many socio-economic systems, where players compete by strategically setting their output quantity. We assume that players do not have full information of the game and thus cannot pre-compute a Nash equilibrium. Two types of feedback are considered: one is bandit feedback and the other is gradient feedback. To study the convergence of the induced sequence of play, we introduce the notion of convergence in measure, and show that the players’ actual sequence of action converges to the unique Nash equilibrium. In addition, our results naturally extend the no-regret learning algorithms’ time-average regret bounds to obtain the final-iteration convergence rates. Together, our work presents significantly sharper convergence results for learning in games
without strong assumptions on game property (e.g., monotonicity) and shows how exploiting the game information feedback can influence the convergence rates.
\end{abstract}

\section{Introduction}
\label{sec:intro}
Game-theoretic models have been used to describe the cooperative and competitive behaviors of a group of players in various systems, including robotics, distributed control and socio-economic systems~\cite{li2019differential,marden2015game,lanctot2017unified,serrino2019finding}.
In this paper, we study the interaction of dynamic learning agents in Cournot games~\cite{cournot1838recherches}.
Cournot game is the essential market model for many critical infrastructure systems such as the energy systems~\cite{cai2019role}, transportation networks~\cite{Kostas14ec} and healthcare systems~\cite{chletsos2019hospitals}. It is also one of the most prevalent models of firm competition in economics. In the Cournot game model, firms control their production level, which influences the market price~\cite{nadav2010no}.
For example, most of the US electricity market is built upon the Cournot model~\cite{kirschen2004fundamentals}, where the generators bid to serve the demand in the power grid, and the electricity price is decided by the total amount of supply and demand. Each generator's payoff is calculated as the market price multiplying its share of the supply deducting any production cost. The goal of each player is to maximize its payoff by strategically choosing the production quantity.

Previous works in Cournot games mostly focus on analyzing the equilibrium behavior, especially the Nash equilibrium. Nash equilibrium describes a stable state of the system, as no player can gain anything by changing only their own strategy. However, it is not clear how players can reach the equilibrium if they do not start from one. Specifically, in the context of Cournot games, each player only has access to local information (i.e., his own payoff), and is not informed about the attributes of other participants. Thus, they cannot pre-compute, or agree on a Nash equilibrium before they begin interacting. For these reasons, in this work, we move away from the static view of assuming players are at a Nash equilibrium. Instead, we analyze the long-run dynamics of learning agents in Cournot games, and ask the following two fundamental questions:
\emph{\begin{enumerate}
		\itemsep0.3em
		\item Will strategic learning agents reach an equilibrium?
		\item If so, how quickly do they converge to the equilibrium?
\end{enumerate}}
Reasoning about these questions requires specifying the dynamics, which describes how players adapt their actions before reaching the equilibrium. In particular, we consider the dynamics induced by no-regret learning algorithms. We focus on no-regret learning algorithms for two main reasons. Firstly, it is a sensible choice for decision makers under \emph{limited information} situations, as no one wants to realize that in hindsight the policy they employed is strictly inferior to a simple policy that takes
the same action throughout. Secondly, it allows us to make \emph{minimal assumptions} on the players' decision policy. Different players can choose different algorithms, as long as the policy satisfies the no regret guarantee. In fact, no-regret learning is an active area of research and includes a wide collection of algorithms, and interested readers can refer to~\citep{gordon2007no,hazan2016introduction} for a detailed review.

\subsection{Related Works}
Dynamic behavior in Cournot games has been studied before. Cournot~\cite{cournot1838recherches} considered the simple best response dynamics in his original paper, where players react to their opponents’ actions on the previous step. Cournot proved that the best response dynamics converge to the unique Nash equilibrium (NE), after sufficiently many steps. However,
his results only apply to two-player games. Since then, a rich set of literature has tried to generalize this result. For example, \cite{Milgrom1990} proposed an adaptive behavior rule to reach the NE for an arbitrary number of players in Cournot games.~\cite{roughgarden2016twenty} showed that in Cournot games with linear price or cost function (hence a potential game), the best response dynamics converge to the NE. However, all these best response dynamics require full information of other players' previous actions and the exact game structure (i.e., price function, payoff functions). However, for many applications of interest, providing full feedback is either impractical such as distributed control~\cite{marden2009overcoming} or explicitly disallowed due to privacy and market power concerns such as energy markets~\cite{ito2016sequential}.

Studying the behavior of players under limited information is challenging, which starts to receive some recent attention. Most literature considers the no-regret dynamics because of their inherent robustness under uncertainty~\cite{bravo2018bandit}.
A general result about no-regret learning in games is that if all players experience no-regret as the time approaches infinity, the time-average action converges to a coarse correlated equilibrium~\cite{roughgarden2016twenty, syrgkanis2015fast, foster2016learning}. However, coarse correlated equilibria (CCE) is a loose notion of equilibrium and may contain actions that are manifestly suboptimal for players~\cite{barman2015finding}. In addition, besides the time-average behavior, players also care about their actual (or so called the final-iteration) behavior.

Regarding the final-iteration convergence to a finer notion of equilibrium (e.g., Nash), assumptions on either game structure or player's action strategy, or both, are usually made. On the game structure side,~\cite{bravo2018bandit} showed that the convergence of online mirror descent (a no-regret algorithm) with bandit feedback to the NE, in strongly monotone games\footnote{We will formally define monotone games in Section 2.3.}.~\cite{cohen2017learning} proved the convergence of multiplicative weights (also a no-regret algorithm) and~\cite{perkins2017mixed} proved the convergence of actor-critic reinforcement learning algorithm, both in potential games. However, a general Cournot game does not satisfy the monotone game and potential game definition, without further restrictions on the allowed price function class (e.g., linear). On the action strategy side,
~\cite{zhou2017mirror,zhou2018learning,mertikopoulos2019learning} relaxed the game structure assumption to variational stable games, which includes a broader range of games. But their proof only applies to online mirror descent (OMD) algorithm and does not generalize to all no-regret dynamics. Similarly,~\cite{bervoets2016learning} assumed a specific action strategy and proved its convergence in games with concave payoff functions. In summary, establishing general convergence in Cournot games under minimal game structure and behavioral hypotheses remains open.

In the most relevant result to our work,~\cite{nadav2010no} studied the no-regret dynamics in Cournot games with linear price functions and convex cost function. They proved the final-iteration convergence to the NE. However, as we will discuss in Section 2.3, Cournot games with their assumptions are \emph{monotone} games thus the analysis are greatly simplified.
Our work is a strict generalization of their results since we allow for a broad class of concave price functions. In fact, to the best of our knowledge, our work is one of the few that obtain positive convergence results without the monotonicity or stronger assumptions in the game structure.

\subsection{Our Contributions} In this work, we study the dynamics of no-regret learning algorithms in Cournot games, and our major contributions are in the following three aspects.

\textbf{Firstly}, we introduce a convergence notion which we call \emph{convergence in measure}. This convergence notion extends the standard convergence notion and permits negligible behavior variations (i.e., actions in a measure zero set). In fact, this notion allows us to treat the convergence question for general no-regret algorithms, without having to restrict ourselves to a specific subclass (such as online mirror descent~\cite{zhou2017mirror,mertikopoulos2019learning}).

\textbf{Secondly}, we provide a detailed analysis of the long-run dynamics of no-regret learning in Cournot games. We prove that both the time-average and final-iteration converge to the unique Nash equilibrium. The latter result on the limiting action of players has not be shown before.  We have no strong assumptions on the game structure (e.g., monotonicity) and no restrictions for the no-regret algorithm type. This is a much sharper result compared to the existing results on the time-average behavior converging to a coarse correlated equilibrium~\cite{roughgarden2016twenty} or the final-iteration convergence with specific game structure assumptions~\cite{bravo2018bandit,cohen2017learning}.

\textbf{Thirdly}, we derive the final-iteration convergence rate under the notion of convergence in measure, and link it to the time-average regret bounds of different no-regret algorithms. For concreteness, we show that convergence rate of zeroth-order FKM algorithm~\cite{flaxman2005online} is $O(T^{-1/4})$ and that of first-order OMD algorithm~\cite{shalev2007online} is $O(T^{-1/2})$. It provides quantitative insights for market designers on the benefit of releasing more information to the players, in terms of the market equilibration rate.

\section{Problem Setup and Preliminaries}
\label{sec:model}
In this section, we first introduce the Cournot game model and assumptions. Then, we provide two motivating examples of its applications in social infrastructure. Next, we review different types of no-regret learning algorithms and discuss the existing convergence results and the difficulties of convergence analysis in Cournot games.
\subsection{Model of Cournot Game}
\begin{definition}[Cournot game~\cite{cournot1838recherches}]
	Consider $N$ players produce homogenous product in a limited market, where the action space of each player is the production level $\forall i, x_i \geq 0$. The payoff function of player $i$ is denoted as $\pi_i(x_1, ..., x_N) = p(\sum_{j=1}^{N} x_j)x_i-C_i(x_i)$, where $p(\cdot)$ is the market price function that maps the total production quantity to a price in $\mathbb{R}^{+}$, and $C_i(\cdot)$ is the production cost function of player $i$.
\end{definition}
The goal of each player in Cournot games is to decide the best production quantity $x_i \geq 0$ such that maximizes his payoff $\pi_i$. An important concept in game theory is the \emph{Nash equilibrium} (NE), at which state no player can increase his payoff via a unilateral deviation in action. The analysis of NE is important since it reveals a stable state of the multi-agent system: once the NE is reached, no one would have the economic incentive to break it. The NE of a Cournot game $(\pi_1, ..., \pi_N)$ is defined by a vector $\bm{{x}^{*}}$, such that $\forall i$,
\begin{equation}\label{eq:def_ne}
\pi_i(x_i^{*}, \bm{{x}_{-i}^{*}}) \geq \pi_i(x_i, \bm{{x}_{-i}^{*}}),  \forall x_i \geq 0\,,
\end{equation}
where $\bm{x}_{-i}$ denotes the actions of all players except $i$. The left side of Eq~\eqref{eq:def_ne} is player $i$'s payoff at the NE, and the right side is that of any deviated action given other players' actions fixed.  In this paper, we restrict our attentions to Cournot games that satisfy the following assumptions:
\begin{assumption}\label{assumption1}
	We assume the Cournot games satisfy:
	\begin{enumerate}
		\item[1)] The market price function $p$ is concave, strictly decreasing, and twice differentiable on $[0,\ymax]$, where $\ymax$ is the first point where $p$ becomes $0$. For $y>\ymax$, $p(y)=0$. In addition, $p(0)>0$. \hfill (A1)
		\item[2)] The individual cost function $C_i(x_i)$ is convex, strictly increasing, and twice differentiable, with $p(0)>C_i'(0)$, for all $i$. \hfill (A2)
	\end{enumerate}
\end{assumption}
These assumptions are standard in literature (e.g., see~\cite{johari2005efficiency} and the references within). The assumption $p(0)>C_i'(0)$ is to avoid the triviality of a player never participating in the game.~\cite{szidarovszky1977new} first proved that Cournot games with the above assumptions have unique Nash equilibrium.
\begin{proposition}\label{prop:unique_ne}
	A Cournot game satisfying (A1) and (A2) has exactly one Nash equilibrium.
\end{proposition}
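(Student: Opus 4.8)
The plan is to reduce the $N$-dimensional Nash fixed-point problem to a single scalar equation in the aggregate output $Q=\sum_{j=1}^{N}x_j$, following the aggregation (``backward reaction'') idea of~\cite{szidarovszky1977new}. The first step is to record that each payoff is strictly concave in the player's own action: writing $Q=x_i+\sum_{j\neq i}x_j$, we have
\begin{equation*}
\frac{\partial^2 \pi_i}{\partial x_i^2}=x_i\,p''(Q)+2p'(Q)-C_i''(x_i)<0,
\end{equation*}
since $p''\le 0$ and $x_i\ge 0$ by (A1), $p'<0$ by strict monotonicity of $p$, and $C_i''\ge 0$ by (A2). Hence each $\pi_i$ is strictly concave in $x_i$ on the relevant range, so best responses are single-valued and the first-order conditions characterize optimality.

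The second step is to express, for every fixed aggregate level $Q\in[0,\ymax]$, player $i$'s optimal output as a function $g_i(Q)$ of $Q$ rather than of the opponents' aggregate. Interior optima satisfy $\phi_i(x_i,Q):=p'(Q)\,x_i+p(Q)-C_i'(x_i)=0$, and a corner optimum $g_i(Q)=0$ occurs exactly when the marginal payoff at $0$ is nonpositive. On the interior branch, implicit differentiation gives
\begin{equation*}
\frac{dg_i}{dQ}=-\frac{p''(Q)\,x_i+p'(Q)}{\,p'(Q)-C_i''(x_i)\,},
\end{equation*}
whose denominator is strictly negative and whose numerator is strictly negative (both summands are $\le 0$ and $p'<0$); therefore $g_i$ is strictly decreasing where it is positive, and flat at $0$ otherwise, so $g_i$ is continuous and non-increasing on $[0,\ymax]$.

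The third step closes the argument by analyzing $G(Q):=\sum_{i=1}^{N}g_i(Q)$. Any Nash equilibrium $\bm{x}^*$ with aggregate $Q^*$ must satisfy $x_i^*=g_i(Q^*)$ for every $i$, hence $G(Q^*)=Q^*$; conversely a root of $G(Q)=Q$ reconstructs an equilibrium through $x_i^*=g_i(Q^*)$. Because each $g_i$ is non-increasing, $G(Q)-Q$ is strictly decreasing, so $G(Q)=Q$ has at most one root. For existence I evaluate the endpoints: at $Q=0$ the marginal payoff of each player at zero output is $p(0)-C_i'(0)>0$ by (A2), forcing $g_i(0)>0$ and thus $G(0)>0$; at $Q=\ymax$ the price is zero, so producing anything is strictly unprofitable and $g_i(\ymax)=0$, giving $G(\ymax)-\ymax<0$. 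Continuity of $G$ and the intermediate value theorem then yield a unique root $Q^*$, which determines a unique equilibrium profile.

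The main obstacle I anticipate is not the monotonicity computation but the careful treatment of corner solutions: I must verify that $g_i$ is well-defined, single-valued, and continuous across the threshold where player $i$ switches between producing and staying out, and that no equilibrium can have $Q^*>\ymax$ (which justifies restricting attention to $[0,\ymax]$). The interior is handled by the implicit function theorem using $\partial_{x_i}\phi_i=p'-C_i''<0$, but gluing the interior branch to the corner branch, and confirming that the resulting $G$ is genuinely continuous and non-increasing at the kinks, is where the assumptions (A1)--(A2) must be used most delicately.
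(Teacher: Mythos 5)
Your proposal is correct and follows essentially the same route as the paper's own proof (which is reproduced from~\cite{szidarovszky1977new}): both reduce the Nash problem to a scalar fixed-point equation in the aggregate output by defining the backward reaction mapping from the first-order condition $p(s) + x\,p'(s) - C_i'(x) = 0$ (the paper's $x_i(s)$, your $g_i(Q)$), showing it is continuous and non-increasing, and then invoking monotonicity plus the intermediate value theorem to get exactly one solution of $\sum_i x_i(s) = s$. Your write-up is merely more explicit about strict concavity, the implicit differentiation, the corner/interior gluing, and the endpoint evaluations, but the underlying argument is identical.
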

The proof of Proposition~\ref{prop:unique_ne} is provided in Appendix B for readers' reference. Below, we briefly discuss two example applications of Cournot game in socio-economic systems.
\paragraph{Example 1 (Wholesale Electricity Market)}
The Cournot model is the most widely adopted framework for electricity market design~\cite{kirschen2004fundamentals}. Suppose there are $N$ electricity producers, each supplying the market with $x_i$ units of energy.
In an uncongested grid\footnote{In a congested grid, all electricity producers still compete in a Cournot game manner (i.e., bidding quantities), while the system operator that transmits electricity sets the congestion price to maximize social surplus of the entire system~\cite{yao2008modeling}.}, the electricity is priced as a decreasing function of the total generated electricity. For instance, consider both the market price and individual production cost function are linear,
the profit of generator $i$ can be written as: $\pi_i(x_i; \bm{x}_{-i}) = x_i (a-b\sum_{j=1}^{N} x_j) - c_i x_i,$
where $c_i\geq 0$ is the marginal production cost of $i$.
\paragraph{Example 2 (Lotteries)} Lotteries are becoming an increasingly important mechanism to allocate limited resources in social contexts, with examples in housing~\cite{friedman2014economics}, parking~\cite{zhang2015competition} and buying limited goods~\cite{phade2019optimal}. These lotteries typically allocate each player with a number of ``coupons'' and here we consider the coupon amount to be continuous (i.e., real numbers). The player's chance of winning depends on the number of coupons he owns, and the total number of coupons played in the round. Suppose $x_1, x_2, \dots,x_N$ are the coupons used by all players, then a decreasing price function $p(\sum_i x_i)$ can be used to model the fact that each player is less likely to win the lottery as others spend more coupons. The profit of player $i$ is
$\pi_i(x_i;\bm{x}_{-i})=p(\sum_{j=1}^N x_j) x_i-x_i$, where $-x_i$ represents the cost of spending the coupons.

\subsection{Review of No-Regret Algorithms}
The concept of Nash equilibrium is useful if players can reach it. However, in many practical settings, players do not have full information on the game (non-cooperative) and thus cannot pre-compute the NE beforehand. Thus the idea of ``learning'' the equilibrium arises and it becomes important to understand the dynamics of the iterative learning process. In this work, we focus on the class of learning algorithms with worst-case performance guarantees, namely the no-regret algorithms.

An algorithm is called no-regret~\cite{hazan2016introduction} (or no-external regret) if the difference between the total payoff it receives and that of the best-fixed decision in hindsight is sublinear as
a function of time. Formally, at each time step $t$, an online algorithm $\mathcal{A}$ selects an action vector $\mathbf{x}_t \in \mathcal{X}$. After $\mathbf{x}_t$ is selected, the algorithm receives $f_t$, and collects a payoff of $f_t(\mathbf{x}_t)$. All decisions are made online, in the sense that the algorithm does not know $f_{t}$ before choosing action $\mathbf{x}_t$.
Here all the payoff functions $f_1, f_2, ..., f_T \in \mathcal{F}$, where $\mathcal{F}$ is a bounded family of functions. Let $T$ denote the total number of game iterations. Then the total payoff collected by algorithm $\mathcal{A}$ until $T$ is $\sum_{t=1}^{T}f_t(\mathbf{x}_t)$, and the total payoff of a static feasible action $\mathbf{\tilde{x}}$ is $\sum_{t=1}^{T}f_t(\mathbf{\tilde{x}})$.
We formally define the regret of $\mathcal{A}$ after $T$ iterations as:
\begin{equation}
R_T(\mathcal{A}) = \max_{\mathbf{x} \in \mathcal{X}} \sum_{t=1}^{T} f_t(\mathbf{x}) - \sum_{t=1}^{T} f_t(\mathbf{x}_t) \,,
\end{equation}

An algorithm $\mathcal{A}$ is said to have no regret, if for every online sequential problem, $\{f_1, f_2, ..., f_T\} \subseteq \mathcal{F}$, the regret is sublinear as a function of $T$, i.e., $R_T(\mathcal{A}) = o(T)$. This implies that the algorithm performs (at least) as well as the best-fixed strategy in hindsight. Such a guarantee is desirable for rational players since no one wants to realize that the decision policy he/she employed is strictly inferior to the same action throughout.
There are a collection of algorithms satisfies the no-regret property, given the action set and the cost functions are both convex. Based on the information available to players, no-regret algorithms can be grouped into two types: zeroth-order (or bandit) algorithms and first-order (or gradient-based) algorithms.

\paragraph{Zeroth-order algorithms.} It accounts for extremely low information environments where players have only the realized payoff information, i.e., $f_t(\mathbf{x}_t)$ obtained from a given action $\mathbf{x}_t$ and nothing else. In game-theoretic settings (especially in non-cooperative games), the bandit feedback framework is more common since players usually only have local information and cannot tell with certainty what are the utilities and actions of other players.
The core of zeroth-order no-regret learning algorithms is to infer the gradient, i.e., obtaining an unbiased gradient estimator with bounded variance.
FKM~\cite{flaxman2005online} is a well-known zeroth-order no-regret algorithm under the single function evaluation situation, which is also known as ``gradient descent without a gradient''. The pseudocode of FKM is provided in Appendix A1 (Algorithm~\ref{alg:FKM}).

\paragraph{First-order algorithms.} As opposed to the zeroth-order algorithms, in first-order algorithms, an oracle that returns the payoff gradient at the queried action (i.e., $\nabla f_{t}(\mathbf{x}_t)$) is assumed available. Therefore, players can adjust their actions by taking a step towards the gradient direction, to maximize their utilities. Online mirror descent~\cite{shalev2007online} is a widely adopted first-order no-regret algorithm, which has been extensively studied under the learning in games setting~\cite{zhou2017mirror,zhou2018learning,mertikopoulos2019learning}. The pseudocode implementation of the online mirror descent algorithm is provided in Appendix A2 (Algorithm~\ref{alg:OMD}).

We want to emphasize that, the no-regret property only tells us about the time-average performance. From the players' perspective, they also care about (if not more) the performance of their final-iteration actions. However, deriving the final-iteration convergence based on the time-average regret is not easy. In this work, we prove both the time-average and final-iteration convergence of payoffs and actions, by ably using the Cournot game structure property, which we will discuss in more detail in Section~\ref{sec:theory}.

\subsection{Existing Convergence Results w.r.t. Cournot Game}\label{sec:existing_conv}
Existing learning in games literature mostly focus on the class of monotone games~\cite{rosen1965existence}.
\begin{definition}[Monotone game]\label{def:MC}
	A game is monotone (or so called diagonally strictly concave) is it satisfies, $\forall \mathbf{x}, \mathbf{x}' \in \mathcal{X}$
	\begin{equation}\label{eq:MC}
	\langle g(\mathbf{x}) - g(\mathbf{x}'), \mathbf{x} - \mathbf{x}' \rangle \leq 0\,,
	\end{equation}
	with equality if and only if $\mathbf{x} = \mathbf{x}'$, where $g(\mathbf{x})$ is the game gradient that $g(x) = \begin{bmatrix} \nabla_1 \pi_1(\mathbf{x}), \cdots, \nabla_N \pi_N(\mathbf{x})\end{bmatrix}^T$.
\end{definition}
Rosen~\cite{rosen1965existence} showed that every concave N-player game\footnote{A concave N-player game requires the action set to be convex and the individual payoff function to be concave in w.r.t. player's own action. Cournot games with assumptions (A1)-(A2) meet this definition.} satisfying this addittional monotonicity condition has a unique NE. He also showed that, starting from any feasible point in the action set, players will always converge to the NE, if they adapt their actions following the payoff gradients. In fact, Rosen's monotonicity condition is a common assumption and the \emph{cornerstone} for many convergence proofs in learning in games literature~\cite{bravo2018bandit}.

However, general Cournot games with assumptions (A1) and (A2) may not satisfy the monotone condition. See the counter example below.
\paragraph{Counter example}
Let consider a four-player Cournot game. The market price is a piecewise linear function with non-negative lower bound: $p(y) =\begin{cases}1-y & 0 \leq y \leq 1\\
0 & y>1\end{cases}$ and the individual production cost function is $C_i(x_i) = 0.05 x_i, \forall x_i \geq 0$. Hence, the payoff of each player is:
$$\pi_i(\mathbf{x}) = p(\sum_{j=1}^{4} x_j) x_i-0.05x_i, \forall i=1,2,3,4.$$
The payoff gradient is $\frac{\partial \pi_i(\mathbf{x})}{\partial x_i} = 0.95-\sum_{j=1}^{4} x_j-x_i\,,$ when $\sum_{j=1}^{4} x_j \leq 1$, and $\frac{\partial \pi_i(\mathbf{x})}{\partial x_i} = - 0.05$ otherwise.
Consider the following two points: $\mathbf{x} = \begin{bmatrix}0.2082, 0.2273, 0.1988, 0.2169\end{bmatrix}^{T}$ and
$\mathbf{x}' = \begin{bmatrix}0.3506,  0.3279, 0.0456, 0.4439\end{bmatrix}^{T}$.

It is easy to check that,
\begin{align*}\label{eq:vs_condition}
& \langle g(\mathbf{x}) - g(\mathbf{x}'), \mathbf{x}-\mathbf{x}' \rangle
= 0.0242 >0
\end{align*}
which contradicts the monotone game definition in Eq~\eqref{eq:MC}. The above counter example shows that the previously examined models and convergence results in monotone games do not apply to Cournot games. In fact, without this nice game structure assumption, it becomes much harder to analyze the dynamics and derive convergence results.

\section{Convergence Analysis in Cournot Games}
\label{sec:theory}
We discuss the main convergence results in this section. The first step is to select the right notion of convergence.
Next, we prove the convergence results in two steps, by first showing the payoff convergence, then deriving the action convergence. At the end of this section, we discuss the impact of different information and pricing mechanisms on the convergence rates.
\subsection{Convergence Definition}
\begin{definition}[Convergence in measure] \label{defn:measure}
	Let $\mu$ be a measure on $\mathbb{N}$.  We say that a sequence $a_t$ converges in measure to $a$ if
	$\forall \epsilon >0$, $\lim_{t \rightarrow \infty} \mu(|a_t - a| > \epsilon) = 0$.
\end{definition}
The reason we need to work with the notion of \emph{convergence in measure} rather than the standard notion of convergence (i.e., $\lim_{t \rightarrow \infty} a_t = a$) is that the latter condition is too stringent for no-regret algorithms. Consider the following example. Given a no-regret algorithm $\mathcal{A}$, we can construct another algorithm $\mathcal{A}'$ in the following manner. Let $M$ be some positive integer larger than $1$. Then the actions produced by $\mathcal{A}'$ is the same as $\mathcal{A}$ except for times $M$, $M^2$, $M^3$, $\dots$. At these times, $\mathcal{A}$' takes on the action $0$ (or any other arbitrary action). Both $\mathcal{A}$ and $\mathcal{A}'$ are no-regret algorithms, since $\mathcal{A}'$ only deviates at a set of vanishing small fraction of points. On the other hand, for $\mathcal{A}'$, its actions cannot converge in the standard sense. Therefore, given only the regret bound, the best final time convergence result we can hope for is convergence in measure as defined in Definition~\ref{defn:measure}.

\subsection{Payoff Convergence}
In this part, we prove the payoff convergence.  Theorem~\ref{thm:conv_payoff_avg} shows the time-average convergence and Theorem~\ref{thm:conv_payoff_actual} sharps the result by showing the final-iteration convergence.
\begin{theorem}[Time-average convergence]
	\label{thm:conv_payoff_avg}
	Suppose that after $T$ iterations, every player has expected regret $o(T)$.
	As $T \rightarrow \infty$, every player's time-average payoff $\frac{1}{T} \sum_{t=1}^{T} \pi_i(\mathbf{x}_t), \forall i$, converges to the payoff at the Nash equilibrium $\pi_i(\mathbf{x}^{*})$.
\end{theorem}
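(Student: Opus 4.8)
The plan is to sandwich each player's time-average payoff around $\pi_i(\mathbf{x}^{*})$, obtaining the lower bound from the no-regret hypothesis and the matching upper bound from a Cournot-specific aggregate inequality that survives even though the full game is non-monotone. Throughout I write $S_t=\sum_j x_{j,t}$ for the realized aggregate and $S^{*}=\sum_j x_j^{*}$ for the equilibrium aggregate, and I pass to expectations so that the bandit case is covered by the expected-regret bound. Since any output beyond $\ymax$ earns price $0$ while incurring strictly increasing cost, each player's effective action lies in a fixed compact interval; hence all payoffs are bounded and continuous, which legitimizes the limiting arguments below.

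First I would record the lower bound. Player $i$ faces the online sequence $f_{i,t}(\cdot)=\pi_i(\cdot\,,\mathbf{x}_{-i,t})$, which is concave in its own argument under (A1)--(A2). Applying the no-regret guarantee with the constant comparator equal to the unique Nash action $x_i^{*}$ (Proposition~\ref{prop:unique_ne}) yields $\tfrac{1}{T}\sum_{t}\pi_i(\mathbf{x}_t)\ge \tfrac{1}{T}\sum_{t}\pi_i(x_i^{*},\mathbf{x}_{-i,t})-\tfrac{\mathbb{E}[R_T^i]}{T}$, and the last term is $o(1)$ by hypothesis. Writing $\pi_i(x_i^{*},\mathbf{x}_{-i,t})=p(x_i^{*}+S_{-i,t})\,x_i^{*}-C_i(x_i^{*})$ already exposes the core difficulty: the opponents enter only through the aggregate $S_{-i,t}$, but through the \emph{nonlinear, concave} price $p$, so the right-hand side is not yet $\pi_i(\mathbf{x}^{*})$.

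The key step, which I expect to be the main obstacle, is to show that the empirical play concentrates at the equilibrium in an averaged sense. The counterexample in Section~\ref{sec:existing_conv} forbids the usual route of invoking monotonicity of the game gradient $g$, so I would instead exploit the one structural inequality that remains: because $p$ is strictly decreasing, $\big(p(S_t)-p(S^{*})\big)\big(S_t-S^{*}\big)\le 0$, with equality only at $S_t=S^{*}$. The plan is to sum the per-player regret (and concavity) inequalities against the comparators $x_i^{*}$ and use the variational optimality of $\mathbf{x}^{*}$, so that the convex-cost contribution $-\sum_i\big(C_i'(x_i)-C_i'(x_i^{*})\big)(x_i-x_i^{*})\le 0$ and the diagonal price term $p'(S_t)\sum_i(x_{i,t}-x_i^{*})^2\le 0$ absorb precisely the non-monotone cross terms that broke Eq.~\eqref{eq:MC}; what should remain is a bound of the form $\tfrac{1}{T}\sum_{t}\big(p(S^{*})-p(S_t)\big)(S_t-S^{*})=o(1)$. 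Since every summand is nonnegative and $p$ is strictly monotone, this forces the time-average aggregate deviation to vanish, and by Jensen on the bounded concave price it gives $\tfrac{1}{T}\sum_t p(S_t)\to p(S^{*})$ and likewise $\tfrac{1}{T}\sum_t p(x_i^{*}+S_{-i,t})\to p(S^{*})$.

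Finally I would upgrade aggregate control to the individual shares and close the sandwich. The price limit turns the lower bound into $\liminf_T \tfrac{1}{T}\sum_t\pi_i(\mathbf{x}_t)\ge p(S^{*})x_i^{*}-C_i(x_i^{*})=\pi_i(\mathbf{x}^{*})$. For the reverse inequality I would invoke uniqueness (Proposition~\ref{prop:unique_ne}): once the aggregate is pinned near $S^{*}$, the only profile consistent with every player's best-response (no-regret) behavior is $\mathbf{x}^{*}$ itself, so the averaged shares satisfy $\tfrac{1}{T}\sum_t x_{i,t}\to x_i^{*}$; combined with boundedness and continuity of $\pi_i$ this gives $\limsup_T \tfrac{1}{T}\sum_t\pi_i(\mathbf{x}_t)\le \pi_i(\mathbf{x}^{*})$, and the two bounds coincide. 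The genuinely delicate part is the previous paragraph together with this upgrade: a naive per-player upper bound fails because, at fixed price $p(S^{*})$, each player would prefer to overproduce relative to $x_i^{*}$, so the cancellation of the non-monotone cross terms and the use of uniqueness to rule out off-equilibrium share allocations are what the argument really turns on.
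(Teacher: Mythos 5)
Your first and last paragraphs lean entirely on the middle one, and the middle one contains a step that is not just unproven but false. Writing $S_t=\sum_j x_{j,t}$ and $S^{*}=\sum_j x_j^{*}$ as in your proposal, summing your per-player regret inequalities with the fixed comparator $x_i^{*}$ yields only $\tfrac{1}{T}\sum_t D_t \le o(1)$, where $D_t=\sum_i\bigl[\pi_i(x_i^{*},\mathbf{x}_{-i,t})-\pi_i(\mathbf{x}_t)\bigr]$. To conclude that $\tfrac{1}{T}\sum_t\bigl(p(S^{*})-p(S_t)\bigr)(S_t-S^{*})\to 0$ you need $D_t$ to dominate this nonnegative quantity (up to a positive constant, possibly after averaging), and no such bound exists: the non-monotone cross terms are exactly what cannot be absorbed, which is the content of the counterexample in Section~\ref{sec:existing_conv}. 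A concrete instance satisfying (A1)--(A2): take $N=2$, $p(y)=1-y^2$ on $[0,1]$ (zero afterwards), $C_i(x)=0.01x$. The unique Nash equilibrium (Proposition~\ref{prop:unique_ne}) is $x_1^{*}=x_2^{*}=\sqrt{0.99/8}\approx 0.3518$, $S^{*}\approx 0.7036$, $p(S^{*})=0.505$. At $\mathbf{x}_t=(0.75,\,0)$ one computes: player 1's gain from switching to $x_1^{*}$ is $0.3047-0.3206\approx-0.016$ (he strictly prefers $0.75$), and player 2's gain is $\approx-0.0035$ (adding $x_2^{*}$ on top of $0.75$ pushes the aggregate past $\ymax$, so he pays cost for zero price), giving $D_t\approx-0.019<0$, while $\bigl(p(S^{*})-p(S_t)\bigr)(S_t-S^{*})\approx 0.0031>0$. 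Consequently no inequality of the form $D_t\ge c\bigl(p(S^{*})-p(S_t)\bigr)(S_t-S^{*})$ with $c>0$ can hold, and the averaged implication fails too: the sequence that plays $(0.75,0)$ every round satisfies $\tfrac{1}{T}\sum_t D_t\le 0$ for all $T$ yet its aggregate deviation average stays at $0.0031$ forever. (That sequence is not no-regret, but nothing in your derivation distinguishes it, because you extract from the no-regret hypothesis only the regret against the single comparator $x_i^{*}$.) The same gap then propagates: both your $\liminf$ bound and your $\limsup$ bound rest on the aggregate concentration, and the closing "uniqueness pins the shares" step is likewise asserted rather than proven --- no-regret players need not best-respond in any round, so pinning the aggregate would not by itself pin the individual shares.

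The paper's proof avoids this trap by never attempting aggregate concentration and by using a richer family of comparators than the fixed point $x_i^{*}$. It introduces the per-round best responses $x_{t,i}^{*}=\arg\max_{\xi}\pi_i(\xi,\mathbf{x}_{t,-i})$, upper-bounds the realized average payoff trivially by $\tfrac{1}{T}\sum_t\pi_i(x_{t,i}^{*},\mathbf{x}_{t,-i})$, and lower-bounds it by the same quantity minus $R_i(T)/T$, using the regret guarantee at the fixed comparator $\tilde{x}_i=\tfrac{1}{T}\sum_t x_{t,i}^{*}$ together with concavity of $\pi_i$ in the player's own action; the squeeze theorem then shows each player's time-average payoff converges to his time-average best-response payoff, simultaneously for all players, which is what identifies the limit with the unique equilibrium payoff. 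If you want to salvage your plan, that is the missing ingredient: the no-regret property must be invoked against time-varying (best-response) benchmarks, since, as your own counterexample section anticipates, the single-comparator inequality is provably too weak in non-monotone Cournot games.
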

\begin{proof}
	Consider the $i$-th player. In each game iteration t, let $(x_{t, i}, \mathbf{x}_{t, -i})$ be the moves
	played by all the players.
	From player $i$’s point of view, the payoff he obtains at time $t$ is,
	\begin{equation}
	\forall \xi \in \mathcal{X}_i, \pi_i(\xi) = \pi_i(\xi, \mathbf{x}_{t, -i}).
	\end{equation}
	Note that this payoff function is concave with respective to his own action $\xi$ by assumption.

	By the definition of regret,
	\begin{equation}
	R_i(T) = \max_{\hat{x}_i \in \mathcal{X}_i} \sum_{t=1}^{T} \pi_i(\hat{x}_i, \mathbf{x}_{t, -i}) - \sum_{t=1}^{T} \pi_i(x_{t, i}, \mathbf{x}_{t, -i}).
	\end{equation}

	Equivalently, $\forall \hat{x}_i \in \mathcal{X}_i$,
	\begin{equation}\label{eq:thm1}
	\frac{1}{T} \sum_{t=1}^{T} \pi_i(x_{t, i}, \mathbf{x}_{t, -i}) \geq \frac{1}{T} \sum_{t=1}^{T} \pi_i(\hat{x}_i, \mathbf{x}_{t, -i}) - \frac{R_i(T)}{T} .
	\end{equation}

	Let consider the best response of player $i$ at time $t$ given all other players' actions as $(x_{t, i}^{*}) = \arg \max_{\xi} \pi_i({\xi, \mathbf{x}_{t, -i}})$. Obviously, player $i$'s payoff is upper bounded by his best response payoff by definition,
	\begin{equation}\label{eq:thm2}
	\frac{1}{T} \sum_{t=1}^{T} \pi_i(x_{t, i}, \mathbf{x}_{t, -i}) \leq \frac{1}{T} \sum_{i=1}^{T} \pi_{i}(x_{t, i}^{*}, \mathbf{x}_{t, -i}).
	\end{equation}

	In addition, since $\pi_i$ is concave with respect to $x_i$, it follows:
	\begin{align}\label{eq:thm3}
	\frac{1}{T} \sum_{i=1}^{T} \pi_{i}(x_{t, i}^{*}, \mathbf{x}_{t, -i}) & \leq
	\frac{1}{T} \sum_{t=1}^{T} \pi_1(\tilde{x}_i, \mathbf{x}_{t, -i})
	\end{align}
	where $\tilde{x}_i = \frac{\sum_{t=1}^{T} x_{t, i}^{*}}{T}$.

	Combining Eq.~\eqref{eq:thm1} and~\eqref{eq:thm3} we have, the difference between the actual payoff and best response is bounded by the regret,
	\begin{align}\label{eq:no_regret_eq3}
	\frac{1}{T} \sum_{t=1}^{T} \pi_i(x_{t, i}, \mathbf{x}_{t, -i})
	\geq \frac{1}{T} \sum_{i=1}^{T} \pi_{i}(x_{t, i}^{*}, \mathbf{x}_{t, -i}) - \frac{R_i(T)}{T}.
	\end{align}

	Combine the lower bound in~\eqref{eq:no_regret_eq3} and upper bound in~\eqref{eq:thm2},
	\begin{align}\label{eq:squeeze}
	\frac{1}{T} \sum_{i=1}^{T} \pi_{i}(x_{t, i}^{*}, \mathbf{x}_{t, -i}) - \frac{R_i(T)}{T} &\leq \frac{1}{T} \sum_{t=1}^{T} \pi_i(x_{t, i}, \mathbf{x}_{t, -i}) \,,\nonumber\\
	& \leq \frac{1}{T} \sum_{t=1}^{T} \pi_i(x_{t, i}^{*}, \mathbf{x}_{t, -i}).
	\end{align}

	Since we know that $R_i(T) = o(T)$ as $T \rightarrow \infty$, use the Squeeze theorem in calculus,
	\begin{equation}\label{eq:time_avg_conv}
	\lim_{T \rightarrow \infty} \frac{1}{T} \sum_{t=1}^{T} \pi_i(x_{t, i}, \mathbf{x}_{t, -i}) = \lim_{T \rightarrow \infty} \frac{1}{T} \sum_{t=1}^{T} \pi_i(x_{t, i}^{*}, \mathbf{x}_{t, -i})\,,
	\end{equation}
	which holds for all players. Therefore, as $T \rightarrow \infty$, the average payoff of each player converges to the payoff at his best response. As every player plays his best response against the other players \emph{simultaneously}, the time-average payoff converges to the Nash equilibrium.
\end{proof}
\begin{theorem}[Final-iteration convergence]
	\label{thm:conv_payoff_actual}
	Suppose that after $T$ iterations, every player has expected regret $o(T)$.
	As $T \rightarrow \infty$, every player's actual payoff $\pi_i(\mathbf{x}_t), \forall i$, converges to the payoff at the Nash equilibrium $\pi_i(\mathbf{x}^{*})$ in measure,
	$$\forall \epsilon >0, \lim_{t \rightarrow \infty} \mu(|\pi_i(\mathbf{x}_t) - \pi_i(\mathbf{x}^{*})| > \epsilon) = 0.$$
\end{theorem}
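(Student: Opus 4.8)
The plan is to reuse the two-sided bound~\eqref{eq:squeeze} established inside the proof of Theorem~\ref{thm:conv_payoff_avg}, which already controls the gap between each player's actual and best-response payoffs, and then to turn a vanishing time-average of a nonnegative sequence into convergence in measure. For each player $i$ and round $t$, define the nonnegative \emph{best-response gap}
$$\delta_{t,i}=\pi_i(x_{t,i}^{*},\mathbf{x}_{t,-i})-\pi_i(x_{t,i},\mathbf{x}_{t,-i})\ge 0,$$
where $x_{t,i}^{*}$ is player $i$'s best response to $\mathbf{x}_{t,-i}$ as in the proof of Theorem~\ref{thm:conv_payoff_avg}. Rearranging~\eqref{eq:squeeze} gives directly
$$0\le\frac{1}{T}\sum_{t=1}^{T}\delta_{t,i}\le\frac{R_i(T)}{T},$$
so the Cesàro average of $\delta_{t,i}$ vanishes because $R_i(T)=o(T)$ (in the bandit case one carries an expectation through, since the hypothesis is on the expected regret).

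Next I would convert this into convergence in measure via Markov's inequality. Working with the normalized counting (density) measure $\mu_T(A)=|A\cap\{1,\dots,T\}|/T$, nonnegativity of $\delta_{t,i}$ yields, for every $\epsilon>0$,
$$\mu_T\big(\{t\le T:\delta_{t,i}>\epsilon\}\big)\le\frac{1}{\epsilon}\cdot\frac{1}{T}\sum_{t=1}^{T}\delta_{t,i}\le\frac{R_i(T)}{\epsilon\,T},$$
and the right-hand side tends to $0$ as $T\to\infty$. Hence $\delta_{t,i}\to 0$ in measure for each $i$, and summing over players the aggregate quantity $\Delta_t=\sum_{i=1}^{N}\delta_{t,i}$ also converges to $0$ in measure.

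The crux is a \emph{stability lemma} that upgrades ``small aggregate gap'' to ``payoff close to the Nash payoff.'' Introduce the Nash-gap function $\Delta(\mathbf{x})=\sum_{i=1}^{N}\big(\max_{\xi}\pi_i(\xi,\mathbf{x}_{-i})-\pi_i(\mathbf{x})\big)$, which is nonnegative, satisfies $\Delta(\mathbf{x}_t)=\Delta_t$, and vanishes exactly at Nash equilibria, hence exactly at $\mathbf{x}^{*}$ by Proposition~\ref{prop:unique_ne}. Because the price is $0$ beyond $\ymax$ and each $C_i$ is strictly increasing, no best response ever exceeds $\ymax$, so the game can be restricted to the compact box $[0,\ymax]^N$, on which $\Delta$ is continuous (the best-response values $\max_{\xi}\pi_i(\xi,\mathbf{x}_{-i})$ are continuous by Berge's maximum theorem). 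I would then show that for every $\epsilon>0$ there is $\eta>0$ with $\Delta(\mathbf{x})<\eta\Rightarrow|\pi_i(\mathbf{x})-\pi_i(\mathbf{x}^{*})|<\epsilon$: if not, a sequence $\mathbf{x}^{(k)}$ with $\Delta(\mathbf{x}^{(k)})\to 0$ yet $|\pi_i(\mathbf{x}^{(k)})-\pi_i(\mathbf{x}^{*})|\ge\epsilon$ would, by compactness, have a limit point $\bar{\mathbf{x}}$ with $\Delta(\bar{\mathbf{x}})=0$ and $\bar{\mathbf{x}}\ne\mathbf{x}^{*}$, contradicting uniqueness.

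Finally I would combine the two ingredients through the contrapositive of the stability lemma, giving the set inclusion
$$\{t:|\pi_i(\mathbf{x}_t)-\pi_i(\mathbf{x}^{*})|>\epsilon\}\subseteq\{t:\Delta_t\ge\eta\},$$
whence monotonicity of $\mu_T$ gives $\mu_T(|\pi_i(\mathbf{x}_t)-\pi_i(\mathbf{x}^{*})|>\epsilon)\le\mu_T(\Delta_t\ge\eta)\to 0$, which is exactly the claimed convergence in measure. I expect the stability lemma to be the main obstacle: the compactness-plus-uniqueness contradiction is clean only after one legitimately restricts the unbounded action space $\{x_i\ge 0\}$ to the compact box $[0,\ymax]^N$ and verifies that the best-response value functions are continuous, so that $\Delta$ is continuous and its unique zero is $\mathbf{x}^{*}$.
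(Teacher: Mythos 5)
Your proposal is correct in outline and, on the key step, is actually \emph{more} complete than the paper's own proof; the two arguments agree on the first half and genuinely diverge on the second. For the first half, your Markov-inequality bound $\mu_T\big(\{t\le T:\delta_{t,i}>\epsilon\}\big)\le\frac{1}{\epsilon T}\sum_{t=1}^{T}\delta_{t,i}\le\frac{R_i(T)}{\epsilon T}$ is the paper's argument in cleaner clothing: the paper instead supposes a non-vanishing fraction of rounds has best-response gap exceeding $\epsilon$, lower-bounds the time-averaged gap by $(T_1/T)\,\epsilon$, and derives a contradiction with $R_i(T)=o(T)$; both versions stand on the same foundation, the two-sided bound \eqref{eq:squeeze} from Theorem~\ref{thm:conv_payoff_avg}. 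For the second half you part ways with the paper, and this is where your proposal adds real value. The paper stops after showing that the per-round best-response gap $|a_t-b_t|$ vanishes in measure, and then concludes the theorem with the single sentence ``since this holds for all players simultaneously,'' which is an assertion rather than a proof: closeness of each player's payoff to his \emph{instantaneous} best-response payoff does not by itself say anything about closeness to the \emph{Nash} payoff $\pi_i(\mathbf{x}^{*})$. Your stability lemma --- the Nash-gap function $\Delta$ is continuous on a compact set, is nonnegative, and vanishes exactly at the unique equilibrium guaranteed by Proposition~\ref{prop:unique_ne}, hence $\Delta(\mathbf{x})<\eta$ forces all payoffs within $\epsilon$ of the Nash payoffs --- is precisely the missing bridge, and the compactness-plus-uniqueness contradiction you sketch for it is sound.

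One wrinkle in your write-up needs a patch. You justify restricting to the box $[0,\ymax]^N$ by noting that no best response exceeds $\ymax$; that places $\mathbf{x}^{*}$ and the best-response benchmarks inside the box, but the iterates $\mathbf{x}_t$ of an arbitrary no-regret algorithm need not lie there, and your stability lemma cannot be invoked at points outside the box. The fix is one line: if $x_{t,i}>\ymax$ for some $i$, then the total production exceeds $\ymax$, the price is zero, and player $i$'s payoff is $-C_i(x_{t,i})$, while his best-response payoff is at least $\pi_i(0,\mathbf{x}_{t,-i})=-C_i(0)$; hence $\delta_{t,i}\ge C_i(\ymax)-C_i(0)=:c_i>0$ since $C_i$ is strictly increasing. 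So every round at which $\mathbf{x}_t$ leaves the box already satisfies $\Delta_t\ge\min_j c_j$, and by your Markov bound such rounds have vanishing density. Replacing $\eta$ by $\min(\eta,\min_j c_j)$ in your final set inclusion closes the argument.
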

\begin{proof}
	We prove Theorem~\ref{thm:conv_payoff_actual} by contradition. In particular, suppose that $\exists \epsilon>0$, such that more than a sub-linear fraction of $t \in \{1, 2, ..., T\}$ satisfies that: $|\pi_i(\mathbf{x}_t) - \pi_i(\mathbf{x}^{*})| > \epsilon$.

	Let define the following notations for the proof. Denote $a_t = \pi_{i}(x_{t, i}^{*}, \mathbf{x}_{t, -i})$, which is the best response payoff for player $i$ given others' action, and $b_t = \pi_i(\mathbf{x}_t)$. Thus
	\begin{align}
	0 \leq b_t \leq a_t\,,
	\end{align}
	Now, let re-arrange all the time steps such that the time where $|b_t-a_t| > \epsilon$ show up in the front. Say there are $T_1$ such points, then
	\begin{align}\label{eq:ref2}
	&\lim_{T \rightarrow \infty} |\frac{1}{T} \sum_{t=1}^{T} (b_t-a_t)|
	= \lim_{T \rightarrow \infty} \frac{1}{T} \sum_{t=1}^{T} |b_t-a_t| (b_t \leq a_t) \,,\nonumber\\
	&= \lim_{T \rightarrow \infty} (\frac{1}{T} \sum_{t=1}^{T_1} |b_t-a_t|+\frac{1}{T} \sum_{t=T_1}^{T} |b_t-a_t|) \nonumber\\
	&\geq \lim_{T \rightarrow \infty} (\frac{T_1}{T} \epsilon + \frac{1}{T} \sum_{t=T_1}^{T} |b_t-a_t|)\,,
	\end{align}
	Since $T_1$ accounts for more than a sub-linear fraction of $T$,  we have $\frac{T_1}{T} \nrightarrow 0$ as $T \rightarrow \infty$. Following Eq. \eqref{eq:ref2},
	\begin{align}\label{eq:ref3}
	\lim_{T \rightarrow \infty} |\frac{1}{T} \sum_{t=1}^{T} (b_t-a_t)|
	&\geq (\lim_{T \rightarrow \infty} \frac{T_1}{T}) \cdot \epsilon > o(T)
	\end{align}
	which contradicts the definition of no-regret algorithms.

	Hence, given any $\epsilon > 0$, as $T \rightarrow \infty$ there exists at most a \emph{measure zero} set of time such that $|b_t-a_t| > \epsilon$.
	Since this holds for all players simultaneously, we have,
	\begin{align}
	\forall i, \lim_{T \rightarrow \infty} \pi_i(\mathbf{x}_t) = \pi_i(\mathbf{x}^{*})\,, \forall t \in [1, ..., T]
	\end{align}
	for all but a measure zero set of time.
\end{proof}
One can interpret Theorem~\ref{thm:conv_payoff_actual} from two angles. On the one hand, given any $\epsilon >0$ (fix the error bound), the set of time that the actual payoff significantly deviates from the NE payoff equals $\frac{R_i(T)}{T}$. For no-regret algorithms with tighter regret bound, the set of time that far away from NE vanishes faster. On the other hand, after $T$ time steps (fix the number of iterations), we have that $\forall i, |\pi_i(\mathbf{x}_t) - \pi_i(\mathbf{x}^{*})| < O(\frac{R_i(T)}{T})$, for all $t \in T$ but a measure zero set. Thus, after the same number of iterations, algorithms with tighter regret bound have smaller error bound.

\subsection{Action Convergence}
Now we turn our attention to prove the action convergence. The following two propositions are needed for the proof.
\begin{proposition}[Inverse function theorem~\cite{stromberg2015introduction}]
	Consider function $f: \mathbb{R}^n \rightarrow \mathbb{R}^n$, and $f(\mathbf{x}_0) = \mathbf{y}_0$. Let $J = \frac{\partial f}{\partial \mathbf{x}}|_{\mathbf{x} = \mathbf{x}_0}$ as the Jacobian of function $f$. If $J$ evaluated at $\mathbf{x}_0$ is invertible, then there exists a continuous and differentiable function $g$ such that,
	$$g(f(\mathbf{x})) = \mathbf{x}\,,$$
	for $\mathbf{x} \in \mathcal{X}$ and $\mathbf{y} \in \mathcal{Y}$ where $\mathcal{X}$ is some open set around $\mathbf{x}_0$ and $\mathcal{Y}$ is some open set around $\mathbf{y}_0$.
\end{proposition}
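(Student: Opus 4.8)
The plan is to establish this local inverse via the Banach fixed-point theorem, the standard route to the inverse function theorem for continuously differentiable maps (a regularity that holds in the present setting, since $f$ is assembled from twice-differentiable price and cost functions). First I would normalize the problem: by composing $f$ with the linear isomorphism $J^{-1}$ and translating in both the domain and the codomain, I may assume without loss of generality that $\mathbf{x}_0 = \mathbf{0}$, $\mathbf{y}_0 = \mathbf{0}$, and $J = I$. This reduction is legitimate precisely because $J$ is invertible by hypothesis, and it affects only the final bookkeeping, since the inverse of the original map is recovered by composing the resulting $g$ with $J^{-1}$.

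Next I would define the auxiliary map $\phi(\mathbf{x}) = \mathbf{x} - f(\mathbf{x})$, so that solving $f(\mathbf{x}) = \mathbf{y}$ becomes equivalent to finding a fixed point of $T_{\mathbf{y}}(\mathbf{x}) = \phi(\mathbf{x}) + \mathbf{y}$. Because $\phi$ is continuously differentiable with $\phi'(\mathbf{0}) = 0$, continuity of $\phi'$ furnishes a closed ball $\overline{B}(\mathbf{0}, r)$ on which $\|\phi'(\mathbf{x})\| \leq \tfrac{1}{2}$; by the mean-value inequality this makes each $T_{\mathbf{y}}$ a contraction with constant $\tfrac{1}{2}$. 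A short estimate then shows that for $\mathbf{y}$ in a sufficiently small ball $\mathcal{Y}$ around the origin, $T_{\mathbf{y}}$ maps $\overline{B}(\mathbf{0}, r)$ into itself, so the Banach fixed-point theorem produces a unique solution $\mathbf{x} = g(\mathbf{y})$ of $f(\mathbf{x}) = \mathbf{y}$. Setting $\mathcal{X} = f^{-1}(\mathcal{Y}) \cap B(\mathbf{0}, r)$ supplies the two open neighborhoods and the identity $g(f(\mathbf{x})) = \mathbf{x}$.

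It then remains to verify the regularity of $g$. Lipschitz continuity is immediate from the contraction estimate: subtracting the fixed-point equations for two values $\mathbf{y}_1, \mathbf{y}_2$ and invoking the $\tfrac{1}{2}$-bound yields $\|g(\mathbf{y}_1) - g(\mathbf{y}_2)\| \leq 2\|\mathbf{y}_1 - \mathbf{y}_2\|$. The harder step, and the one I expect to be the main obstacle, is the differentiability of $g$. Here I would first shrink $\mathcal{X}$ so that $f'(\mathbf{x})$ remains invertible throughout---possible since $\det f'$ is continuous and nonzero at the origin---and then, fixing $\mathbf{y} = f(\mathbf{x})$, guess that $g'(\mathbf{y}) = [f'(\mathbf{x})]^{-1}$. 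Verifying this guess requires substituting the increment $\mathbf{h} = g(\mathbf{y} + \mathbf{k}) - g(\mathbf{y})$ into the first-order expansion $f(\mathbf{x} + \mathbf{h}) - f(\mathbf{x}) = f'(\mathbf{x})\mathbf{h} + o(\|\mathbf{h}\|)$ and carefully controlling the remainder: the Lipschitz bound forces $\|\mathbf{h}\| = O(\|\mathbf{k}\|)$, so after applying the bounded operator $[f'(\mathbf{x})]^{-1}$ the error $o(\|\mathbf{h}\|)$ becomes $o(\|\mathbf{k}\|)$, which is exactly what differentiability of $g$ demands. Combining the Lipschitz estimate for $g$ with the first-order expansion of $f$ in this clean way is the technical heart of the argument.
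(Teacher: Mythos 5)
Your proposal is correct: it is the classical contraction-mapping (Banach fixed-point) proof of the inverse function theorem, and all of its steps---the normalization to $J=I$, the auxiliary map $\phi(\mathbf{x})=\mathbf{x}-f(\mathbf{x})$, the self-mapping and contraction estimates, the Lipschitz bound $\|g(\mathbf{y}_1)-g(\mathbf{y}_2)\|\leq 2\|\mathbf{y}_1-\mathbf{y}_2\|$, and the verification that $g'(\mathbf{y})=[f'(\mathbf{x})]^{-1}$---are sound. Note, however, that the paper gives no proof of this proposition at all: it is imported as a known result with a citation to a real-analysis textbook, so there is no in-paper argument to compare against; your proof is essentially the one that the cited reference would supply. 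One point you handled well and that deserves emphasis: the proposition as stated only assumes invertibility of the Jacobian at the single point $\mathbf{x}_0$, whereas the theorem genuinely requires $C^1$ regularity of $f$ in a neighborhood (differentiability at one point is not enough), and you correctly observed that this stronger hypothesis holds in the paper's setting because the payoff map is built from twice-differentiable price and cost functions.
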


\begin{proposition}[Lipschitz continuity]
	A function $f$ from $\mathcal{X} \subset \mathbb{R}^n$ into $\mathbb{R}^n$ is Lipschitz continuous at $\mathbf{x_1} \in \mathcal{X}$ if there is constant $L \in \mathbb{R}^{+}$ such that,
	$$||f(\mathbf{x}_2)-f(\mathbf{x}_1)||_2 \leq L||\mathbf{x}_2-\mathbf{x}_1||_2\,,$$
	for all $\mathbf{x}_2 \in \mathcal{X}$ sufficiently near $\mathbf{x}_1$.
\end{proposition}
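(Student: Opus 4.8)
The plan is to establish the displayed local Lipschitz inequality directly from the only hypothesis that makes it non-vacuous in this context, namely continuous differentiability of $f$ on the open set $\mathcal{X}$. In the paper this is exactly the regularity supplied by the preceding inverse function theorem for the inverse map $g$, so the substantive content to be proved is that a $C^1$ map satisfies the stated bound on a neighborhood of each point, with an explicit Lipschitz constant coming from the Jacobian. No game structure (monotonicity, concavity of payoffs) enters — the claim is a purely analytic consequence of $C^1$ regularity.

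First I would fix $\mathbf{x}_1 \in \mathcal{X}$ and, using that $\mathcal{X}$ is open, choose $r>0$ with the closed ball $\bar{B}(\mathbf{x}_1,r) \subset \mathcal{X}$. Since $f$ is $C^1$, its Jacobian $Df$ is continuous, and the map $\mathbf{x} \mapsto ||Df(\mathbf{x})||_2$ (operator norm) is therefore continuous on the compact set $\bar{B}(\mathbf{x}_1,r)$; hence it attains a finite maximum there, which I name $L \in \mathbb{R}^{+}$. This $L$ is the constant I claim works, and the ball $\bar{B}(\mathbf{x}_1,r)$ is the quantitative meaning of ``sufficiently near $\mathbf{x}_1$.''

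Next I would exploit the convexity of the ball. For any $\mathbf{x}_2 \in \bar{B}(\mathbf{x}_1,r)$ the segment $s \mapsto \mathbf{x}_1 + s(\mathbf{x}_2-\mathbf{x}_1)$, $s\in[0,1]$, stays inside $\bar{B}(\mathbf{x}_1,r)$, so the fundamental theorem of calculus applied along this segment gives
$$f(\mathbf{x}_2)-f(\mathbf{x}_1) = \int_0^1 Df\big(\mathbf{x}_1 + s(\mathbf{x}_2-\mathbf{x}_1)\big)\,(\mathbf{x}_2-\mathbf{x}_1)\, ds.$$
Taking Euclidean norms, moving the norm inside the integral (triangle inequality for integrals), and bounding each Jacobian factor by $L$ yields $||f(\mathbf{x}_2)-f(\mathbf{x}_1)||_2 \leq L\,||\mathbf{x}_2-\mathbf{x}_1||_2$, which is precisely the asserted inequality for all $\mathbf{x}_2$ in the ball.

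The only delicate point is the bookkeeping that keeps the integration segment inside the region where $Df$ is controlled: this is exactly what forces the local ``sufficiently near'' qualifier rather than a global constant, and it is handled cleanly by integrating over a convex ball instead of an arbitrary neighborhood. I expect this containment argument — and the verification that the hypotheses of the inverse function theorem indeed deliver a $C^1$ map to which this proposition applies — to be the main (and only) thing requiring care; the norm estimate itself is routine once the convex neighborhood is in place.
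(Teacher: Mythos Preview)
The paper gives no proof of this proposition because, as written, it is a \emph{definition}: the sentence ``A function $f$ \dots\ is Lipschitz continuous at $\mathbf{x}_1$ if there is a constant $L$ \dots'' is introducing the term ``Lipschitz continuous at a point,'' not asserting that some class of functions enjoys this property. There is accordingly nothing to prove, and the paper moves on immediately.

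You have correctly identified that the \emph{use} of this proposition in the proof of Theorem~3 requires the implicit claim that a continuously differentiable map (namely the local inverse $g$ produced by the inverse function theorem) is locally Lipschitz, and you have supplied a clean and correct proof of that fact via the mean-value/integral bound on a convex ball. That argument is standard and sound. So while your write-up is mathematically fine and even fills a small gap the paper leaves tacit, it is not a proof of the stated proposition---which has none in the paper because none is needed---but rather of the unstated lemma that $C^1 \Rightarrow$ locally Lipschitz.
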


\begin{theorem}[Convergence in action]
	\label{thm:action_conv}
	Let $\mathbf{x}^{*}$ denote the Nash equilibrium, suppose that $\mathbf{x}$ satisfies:
	$$||\pi(\mathbf{x})- \pi(\mathbf{x}^{*})||_2 \leq \epsilon\,, \text{(closeness in payoff)}\,, $$
	then it implies that,
	$$||\mathbf{x} - \mathbf{x}^{*}||_2 \leq L \cdot \epsilon\,, \text{(closeness in action)}\,, $$
	where $\pi(\mathbf{x}) = [\pi_1(\mathbf{x}), ..., \pi_N(\mathbf{x})]^T$ is the payoff function/vector of a N-player Cournot game with assumptions (A1) and (A2), and $L \in \mathbb{R}^{+}$ is a constant.
\end{theorem}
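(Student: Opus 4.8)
The plan is to realize the payoff map $\pi:\mathbb{R}^N\to\mathbb{R}^N$ as a local diffeomorphism around the Nash equilibrium $\mathbf{x}^*$ and to read off the action bound from the Lipschitz constant of its inverse. Concretely, I would invoke the Inverse Function Theorem at $\mathbf{x}^*$: if the Jacobian $J=\partial\pi/\partial\mathbf{x}|_{\mathbf{x}^*}$ is invertible, there is a continuously differentiable local inverse $g$ with $g(\pi(\mathbf{x}))=\mathbf{x}$ on a neighborhood of $\mathbf{x}^*$. Since $g$ is $C^1$ it is Lipschitz near $\mathbf{y}^*=\pi(\mathbf{x}^*)$ with some constant $L$ (essentially $\|J^{-1}\|$), so for any $\mathbf{x}$ with $\|\pi(\mathbf{x})-\pi(\mathbf{x}^*)\|_2\le\epsilon$ we obtain $\|\mathbf{x}-\mathbf{x}^*\|_2=\|g(\pi(\mathbf{x}))-g(\pi(\mathbf{x}^*))\|_2\le L\|\pi(\mathbf{x})-\pi(\mathbf{x}^*)\|_2\le L\epsilon$, which is exactly the claim. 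I would note that this bound is inherently local, i.e.\ valid for $\epsilon$ small enough, which is precisely the regime needed when combined with the payoff convergence of Theorem~\ref{thm:conv_payoff_actual}.

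The crux, and the main obstacle, is establishing that $J$ is invertible at $\mathbf{x}^*$. I would first compute the Jacobian directly: writing $y=\sum_j x_j$, one finds $J_{ik}=p'(y)x_i+\delta_{ik}\bigl(p(y)-C_i'(x_i)\bigr)$, so $J=p'(y)\,\mathbf{x}\mathbf{1}^{\top}+D$ is a diagonal matrix $D=\mathrm{diag}(d_i)$, $d_i=p(y)-C_i'(x_i)$, plus a rank-one term. Then I would exploit the equilibrium first-order conditions: assuming the NE is interior (every player participates, which is the role of the assumption $p(0)>C_i'(0)$), stationarity $\partial\pi_i/\partial x_i=0$ gives $d_i=-p'(y^*)x_i^*$. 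Because $p$ is strictly decreasing ($p'<0$) and $x_i^*>0$, each $d_i>0$, so $D$ is invertible.

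With $D$ invertible I would apply the matrix-determinant lemma to the rank-one update,
\[
\det(J)=\det(D)\Bigl(1+p'(y^*)\,\mathbf{1}^{\top}D^{-1}\mathbf{x}^*\Bigr)=\Bigl(\prod_i d_i\Bigr)\Bigl(1+p'(y^*)\sum_i\tfrac{x_i^*}{d_i}\Bigr).
\]
Substituting $d_i=-p'(y^*)x_i^*$ collapses the sum to $\sum_i x_i^*/d_i=N/(-p'(y^*))$, whence the second factor equals $1-N$. Therefore $\det(J)=(1-N)\prod_i d_i\ne 0$ for every $N\ge 2$, establishing invertibility. The degenerate monopoly case $N=1$ genuinely fails — there $J=\partial\pi_1/\partial x_1=0$ at the optimum and closeness in payoff cannot pin down the action (one could sit on either side of the peak) — so the statement should be read for $N\ge 2$.

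Two technical points I would verify along the way. First, I must ensure $\pi$ is smooth at $\mathbf{x}^*$: since $p$ has a kink at $\ymax$, I would argue $y^*<\ymax$ (pushing total output to $\ymax$ drives the price to $0$ and makes production strictly unprofitable), so the equilibrium lies where $p$ is twice differentiable by (A1) and $J$ is well defined. Second, I would confirm the interiority $x_i^*>0$ that the determinant computation relies on, since a boundary player with $x_i^*=0$ would make $\prod_i d_i=0$ and break invertibility; this is where assumptions (A1)–(A2) together with the participation condition $p(0)>C_i'(0)$ do the real work, tying back to the structure underlying Proposition~\ref{prop:unique_ne}.
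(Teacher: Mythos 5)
Your proposal follows essentially the same route as the paper: compute the Jacobian of $\pi$ at the (interior) Nash equilibrium, show it is invertible, and then invoke the inverse function theorem plus Lipschitz continuity of the local inverse $g$ to conclude $||\mathbf{x}-\mathbf{x}^{*}||_2 \leq L\epsilon$. The only real difference is how invertibility is certified: you apply the matrix-determinant lemma to the diagonal-plus-rank-one structure to get $\det(J) = (1-N)\prod_i d_i \neq 0$, whereas the paper shows the null space of $\mathbf{x}^{*}\mathbf{1}^{T} - \mathrm{diag}(x_1^{*},\dots,x_N^{*})$ is trivial directly; both arguments rest on the same facts ($x_i^{*}>0$, $\sum_i x_i^{*} < \ymax$, and implicitly $N \geq 2$), which you actually flag more explicitly than the paper does.
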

\begin{proof}
	Recall that the individual payoff function in Cournot games is, $\pi_i(\mathbf{x}) = p(\sum_{j=1}^{N} \mathbf{x}_j)x_i - C_i(x_i)$, and $\pi(\mathbf{x}) = [\pi_1(\mathbf{x}), ..., \pi_N(\mathbf{x})]^T$ is the collection of all players' payoffs.

	Let $J = \frac{\partial \pi}{\partial \mathbf{x}}$ denotes the Jacobian of function $\pi(\mathbf{x})$. Firstly, we show that $J(\mathbf{x}^{*})$ is non-singular, where $\mathbf{x}^{*}$ is the NE. For the Jacobian entries, we have
	$J_{i, i}(\mathbf{x}^{*}) = \frac{\partial \pi_i(\mathbf{x})}{\partial x_i} = 0$ (diagonal entries) and $J_{i, j (i \neq j)}(\mathbf{x}^{*}) = \frac{\partial \pi_i(\mathbf{x})}{\partial x_j} = p'(\sum_{j=1}^{N} \mathbf{x}_j^{*}) x_i^{*}$ (non-diagonal entries). Then the Jacobian equals,
	\begin{align}
	J(\mathbf{x}^{*}) =
	\begin{bmatrix}
	0 & P^{*} x_1^{*} &  \cdots & P^{*} x_1^{*}\\
	P^{*} x_2^{*} & 0 & \cdots & P^{*} x_2^{*}\\
	& & \ddots & \\
	P^{*} x_N^{*} & P^{*} x_N^{*} & \cdots & 0
	\end{bmatrix}.
	\end{align}
	where $P^{*} = p'(\sum_{j=1}^{N} \mathbf{x}_j^{*})$ is the market price at the NE.

	Concisely, $J(\mathbf{x}^{*})$ can be written as,
	\begin{align}
	J(\mathbf{x}^{*}) &= P^{*}(\mathbf{x}^{*} \cdot \mathbf{1}^T) - P^{*} \cdot diag(\mathbf{x}_1^{*}, \mathbf{x}_2^{*}..., \mathbf{x}_N^{*}).
	\end{align}

	We argue that $P^{*} > 0$ (market price at the NE is positive). Suppose that at the NE, $\sum_{i} x_i^{*} \geq y_{max}$ where $y_{max}$ is the first point such that $p$ becomes zero. Then at least one of the $x_i^{*}$ is positive, as by assumption $p(0)>0$. However under this case, firm $i$ can be \emph{strictly better} off if it reduces $x_i^{*}$, which contradicts the definition of NE. Therefore, we have $P^{*} > 0$ and $\sum_{i} x_i^{*} < y_{max}$ at the NE.

	Thus, in order to show $J(\mathbf{x}^{*})$ is invertible, it suffices to show that $\left(\mathbf{x}^{*} \cdot \mathbf{1}^T -  diag(\mathbf{x}_1^{*}, \mathbf{x}_2^{*}..., \mathbf{x}_N^{*})\right)$ is invertible. Suppose $\mathbf{v} = [v_1, v_2, ..., v_N]^T$ solves the following equation,
	\begin{equation}\label{eq: jacobian_v_zero}
	\left(\mathbf{x}^{*} \cdot \mathbf{1}^T -  diag(\mathbf{x}_1^{*}, \mathbf{x}_2^{*}..., \mathbf{x}_N^{*})\right) \mathbf{v} = 0\,,
	\end{equation}
	Since $\forall i, x_i^{*} \neq 0$ (game admits no trivial solutions), the above linear system has the same solution as the following,
	\begin{equation}
	(\mathbf{1}^T \mathbf{v})  \mathbf{1} - \mathbf{v} = 0.
	\end{equation}
	which holds iff $\mathbf{v} = 0$.

	Therefore, $\left(\mathbf{x}^{*} \cdot \mathbf{1}^T -  diag(\mathbf{x}_1^{*}, \mathbf{x}_2^{*}..., \mathbf{x}_N^{*})\right)$ is invertible, and it follows $J(\mathbf{x}^{*})$ is also invertible.

	By the inverse function theorem, as $J(\mathbf{x}^{*})=\frac{\partial \pi}{\partial \mathbf{x}}|_{\mathbf{x}=\mathbf{x}^{*}}$ is invertible, there exists a continuously differentiable function $g$ (as the inverse function of $\pi$) such that,
	\begin{align}
	g(\pi(\mathbf{x})) = \mathbf{x}\,, \forall \pi \in \{\hat{\pi} \in \mathbb{R}^n: ||\hat{\pi}-\pi^{*}||\leq \epsilon \}\,,
	\end{align}

	By Lipschitz continuity, we have,
	\begin{align}
	||\mathbf{x}-\mathbf{x}^{*}||_2 &= ||g(\pi(\mathbf{x})) - g(\pi(\mathbf{x}^{*}))||_2 \,,\nonumber\\
	&\leq L ||\pi(\mathbf{x})- \pi(\mathbf{x}^{*})||_2
	\end{align}
	Therefore, given the payoffs are close, i.e., $||\pi(\mathbf{x})- \pi(\mathbf{x}^{*})||_2 \leq \epsilon$, the actions will also be close, i.e., $||\mathbf{x}-\mathbf{x}^{*}||_2 \leq L \cdot \epsilon$ and $L \in \mathbb{R}^{+}$.
\end{proof}
\subsection{Convergence Rate}
The previous parts proved that the payoffs and actions both converge to the NE of the game, under no-regret dynamics. However, the derivation steps do not explicitly provide us the convergence rate. In this part, we complete the analysis by discussing the convergence rate. Recall that the payoff convergence is based on the following equation (Eq~\eqref{eq:squeeze} in Section 3.2),
\begin{align*}
\frac{1}{T} \sum_{i=1}^{T} \pi_{i}(x_{t, i}^{*}, \mathbf{x}_{t, -i}) - \frac{R_i(T)}{T} & \leq \frac{1}{T} \sum_{t=1}^{T} \pi_i(x_{t, i}, \mathbf{x}_{t, -i})\\
& \leq \frac{1}{T} \sum_{t=1}^{T} \pi_i(x_{t, i}^{*}, \mathbf{x}_{t, -i})\,,
\end{align*}
where $R_i(T)$ is the algorithm regret after $T$ iterations. Therefore, the rate of convergence naturally connects to the algorithm's regret bound.
\paragraph{Zeroth-order algorithm} ~\cite{flaxman2005online} provides that the regret bound of FKM algorithm is $R(T) = O(T^{\frac{3}{4}})$. By Theorem~\ref{thm:conv_payoff_actual}, we have that,
\begin{align*}
||\pi(\mathbf{x}_{t}) - \pi(\mathbf{x^{*}})||_2 \leq O(\frac{R(T)}{T}) = O(T^{-\frac{1}{4}})\,,
\end{align*}
for all $t$ (except a measure zero set of time steps), as $T$ goes to infinity. Since the payoff is bounded, by Theorem~\ref{thm:action_conv}, the action is also bounded $||\mathbf{x}_t-\mathbf{x}^{*}||_2 \leq O(T^{-\frac{1}{4}})$
for all but a measure zero set of time.
\paragraph{First-order algorithm} The regret bound for online mirror descent (OMD) is $R(T) = O(T^{\frac{1}{2}})$~\cite{hazan2016introduction}. Similarly, we have that,
\begin{align*}
||\pi(\mathbf{x}_t) - \pi(\mathbf{x^{*}})||_2 \leq O(\frac{R(T)}{T}) = O(T^{-\frac{1}{2}})\,,
\end{align*}
for all $t$ but a measure zero set, as $T$ goes to infinity. By Theorem 3, the action is also bounded by
$||\mathbf{x}_t-\mathbf{x}^{*}||_2 \leq O(T^{-\frac{1}{2}})$
for all but a set of measure zero time.

Comparing the convergence rates between zeroth-order algorithms and first-order algorithms, we find that the benefits of having access to the gradient information are $O(T^{-\frac{1}{4}})$ in terms of player's equilibration rate. Using this insight, it is interesting to think from the market operator's shoe. In most current markets (e.g., electricity market), the system operator only provides the zeroth-order information for participants. However, our results suggest that by offering more information, the market (aggregate production levels, prices) can converge to the stable state faster. This observation provides a new angle to the vast amount of economics literature (e.g.,~\cite{athey2018value} and the references within) on studying the value of information in game efficiency. Our results imply that sharing more information can not only improve market efficiency but will also contribute to better computational performance.

\subsection{Discussion on Game Structure and Convergence}
Finally, we discuss how the convergence rate is affected by the game structure. In section~\ref{sec:existing_conv}, we gave a counter example showing that Cournot games may not be monotone games. But what if we restrict the price and individual cost function class such that Cournot games satisfy the monotonicity property? Will it lead to different convergence rates?

For example, consider a Cournot game with linear price function $P(\sum_{i=1}^{N} x_i) = 1-\sum_{i=1}^{N} x_i$ and linear individual cost $C_i(x_i) = x_i, \forall i$. By simple calculation, we find that this game is not only monotone but also strongly monotone in a sense that,
\begin{equation}
\sum_{i \in N} \lambda_i \langle g_i(\mathbf{x}')-g_i(\mathbf{x}), \mathbf{x}'-\mathbf{x} \rangle \leq -\frac{\beta}{2} ||\mathbf{x}'-\mathbf{x}||^2\,,
\end{equation}
for some $\lambda_i, \beta > 0$ and for all $\mathbf{x}, \mathbf{x'} \in \mathcal{X}$.~\cite{bravo2018bandit} proved that in strongly monotone games, zeroth-order no-regret algorithms can achieve $O(T^{-\frac{1}{3}})$ convergence rate, and first-order algorithms have $O(T^{-1})$ convergence rate.

Compared to our results in general Cournot games, that is $O(T^{-\frac{1}{4}})$ for zeroth-order algorithms and $O(T^{-\frac{1}{2}})$ for first-order algorithms, the benefits of using linear price function and having the strongly monotone property can be measured quantitatively. This provides yet another useful insight for market designers on the impact of price function design (hence the game property) on the market equilibration rate, in addition to the information mechanism design.

\section{Numerical Experiments}
\label{sec:results}
We provide two Cournot game examples and visualize the no-regret dynamics in these games. These toy examples aim to help readers quickly grasp the key theoretic results from three perspectives: 1) the convergence behavior; 2) the convergence rate differences between zeroth-order and first-order no-regret algorithms; and 3) the impact of game structure on convergence rates.

\noindent \textbf{Setup} We consider two four-player Cournot games with different market price and individual cost functions. $\textbf{G1:}$ a monotone Cournot game where $p(\bm{x}) = 1-(\sum_i x_i)$, and a linear production cost function is $C_i(x_i)= 0.05 x_i$. $\textbf{G2:}$ a Cournot game that is not monotone. We take the counter example in Section~\ref{sec:existing_conv} that the price function is piecewise linear that $p(\bm{x}) = 1-(\sum_{i}x_i), 0 \leq \sum_{i}x_i \leq 1$ and $p(\bm{x})=0$ otherwise. The individual production cost is $C_i(x_i) = 0.05 x_i, \forall x_i \geq 0$. The Nash equilibrium for both games is $x_1^{*} = x_2^{*} = x_3^{*} = x_4^{*} = 0.19$, leveraging the fact we proved within Theorem~\ref{thm:action_conv} that $\sum_i x_i^{*} \leq y_{max}$ for G2.

Both games proceed as follows. At each time step, every player simultaneously picks a production level, and then the market price is determined by their joint production and broadcasted back to all players. Each player calculates his own payoff by multiplying the production level by the market price, subtracting the cost. According to the observed payoff, players adjust their action strategies for the next round.
The game is repeated for multiple times with players either all using the zeroth-order FKM algorithm or the first-order OMD algorithm. The algorithm implementation details can be found in Appendix A3. In the OMD case, each player's payoff gradient is also calculated and broadcasted back to the corresponding player. We record the actions, payoffs, and the market price at each round.
\begin{figure}[htbp]
	\vskip 0.2in
	\begin{center}
		\subfigure[G1: FKM]{\includegraphics[width = 0.45\columnwidth]{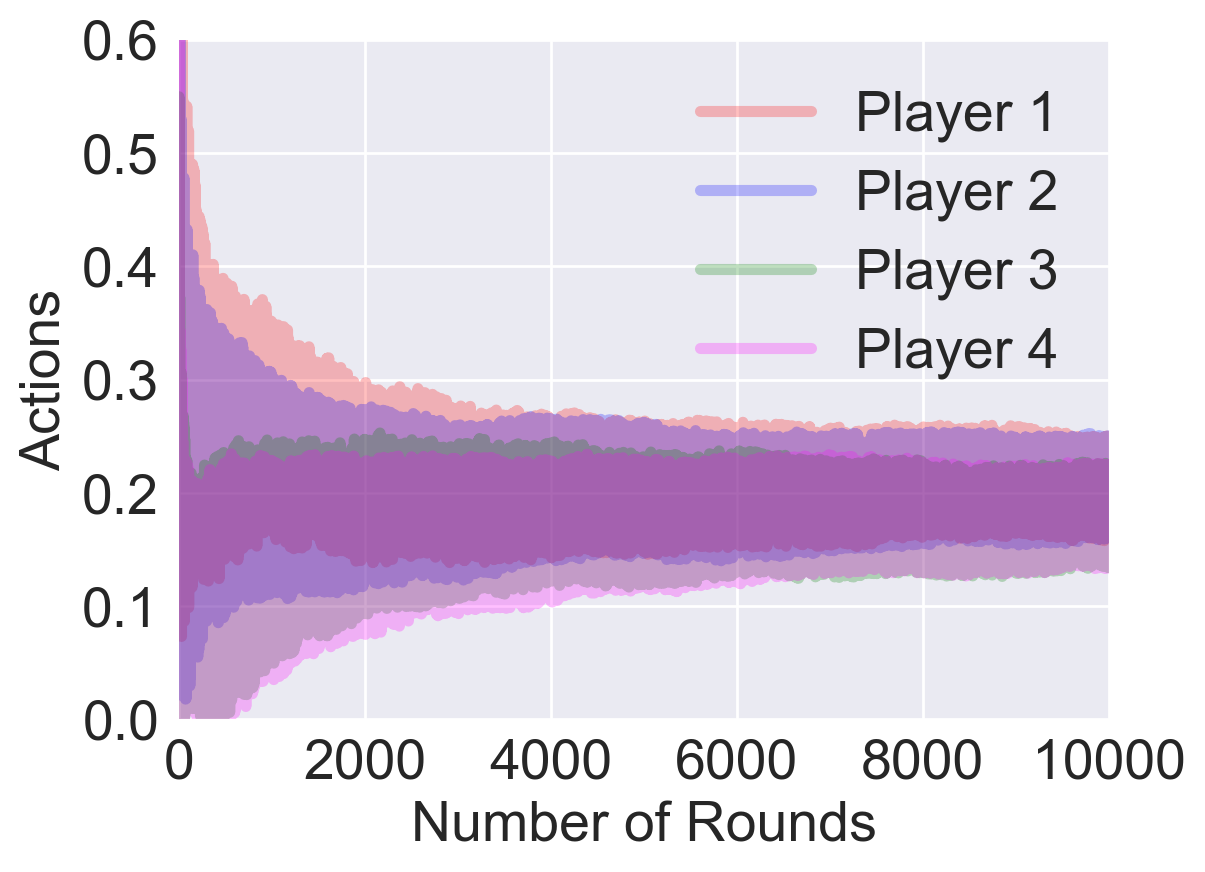}}
		\subfigure[G1: OMD]{\includegraphics[width = 0.45\columnwidth]{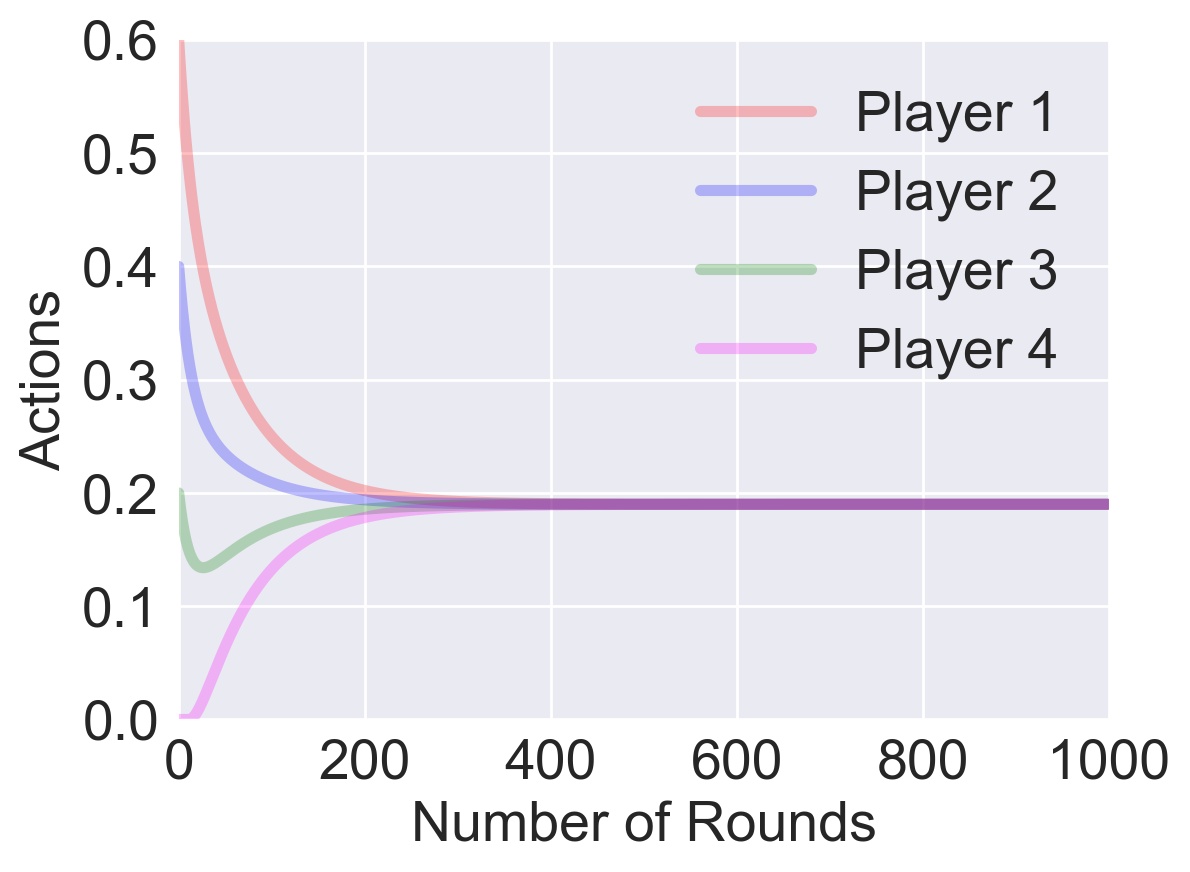}}\\
		\subfigure[G2: FKM]{\includegraphics[width = 0.45\columnwidth]{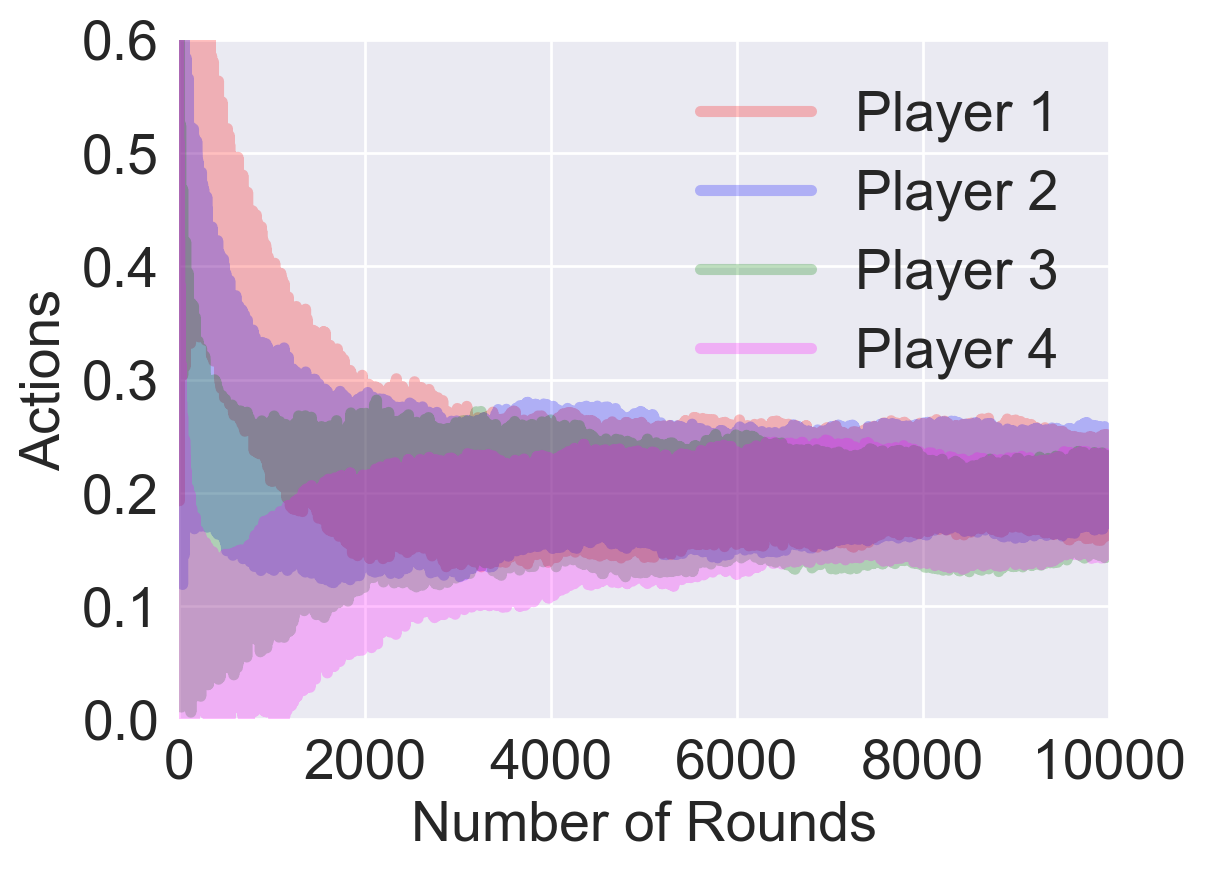}}
		\subfigure[G2: OMD]{\includegraphics[width = 0.45\columnwidth]{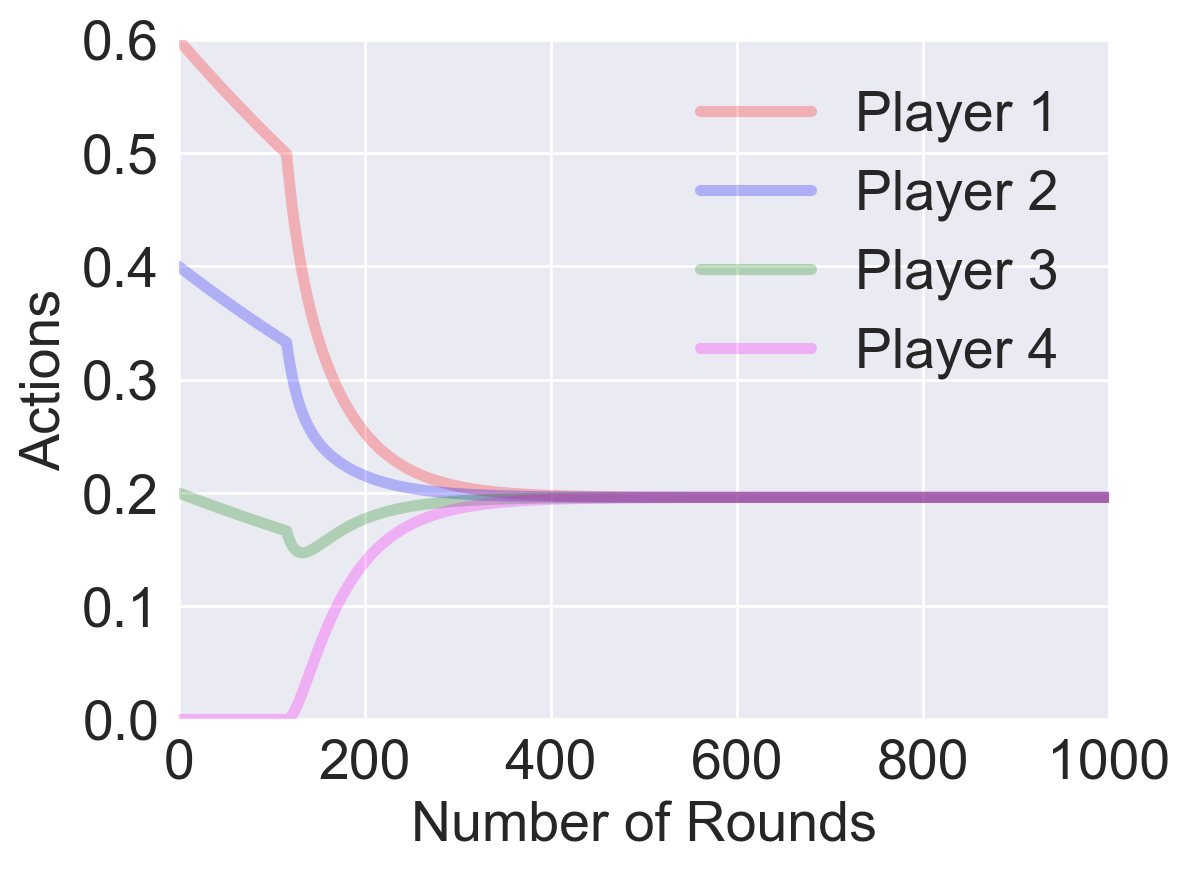}}
		\caption{Convergence behavior of FKM and OMD in two example Cournot games. }
		\label{fig:conv}
	\end{center}
	\vskip -0.2in
\end{figure}

Fig.~\ref{fig:conv} shows that all players' actions converge to the NE under both games, for both FKM and OMD algorithms. However, the convergence rate differs significantly. Comparing the performance of FKM (left column) against the performance of OMD (right column), it is obvious that the convergence rate of OMD is much faster, which demonstrates the benefits of having access to the payoff gradient information.
In addition, viewing the convergence rate difference between $G1$ (upper row) and $G2$ (bottom row), we find that the convergence is faster in $G1$ when the game is monotone, for both algorithms. This illustrates the gain of certain game structure. These observations are aligned with the theoretical results in Section~\ref{sec:theory}. 
\section{Conclusion}
In this paper, we study the interaction of strategic players in Cournot games with concave price functions and convex cost functions. We consider the dynamics of players actions and payoffs when all players use no-regret algorithms. We prove the time-average and final-iteration convergence for both payoffs and actions. Furthermore, we quantify the value of information and game structure in terms of players' convergence rates. It suggests that  different information and game structures can lead to faster convergence rate, which provides insights for market mechanism design with Cournot models. Our work is a strict generalization of all previously examined models, as they apply to \emph{all} no-regret dynamics in general Cournot games, without game structure assumptions and restrictions on the type of algorithms.
\bibliography{reference}

\begin{thebibliography}{37}
\providecommand{\natexlab}[1]{#1}
\providecommand{\url}[1]{\texttt{#1}}
\expandafter\ifx\csname urlstyle\endcsname\relax
  \providecommand{\doi}[1]{doi: #1}\else
  \providecommand{\doi}{doi: \begingroup \urlstyle{rm}\Url}\fi

\bibitem[Athey \& Levin(2018)Athey and Levin]{athey2018value}
Athey, S. and Levin, J.
\newblock The value of information in monotone decision problems.
\newblock \emph{Research in Economics}, 72\penalty0 (1):\penalty0 101--116,
  2018.

\bibitem[Barman \& Ligett(2015)Barman and Ligett]{barman2015finding}
Barman, S. and Ligett, K.
\newblock Finding any nontrivial coarse correlated equilibrium is hard.
\newblock \emph{ACM SIGecom Exchanges}, 14\penalty0 (1):\penalty0 76--79, 2015.

\bibitem[Bervoets et~al.(2016)Bervoets, Bravo, and Faure]{bervoets2016learning}
Bervoets, S., Bravo, M., and Faure, M.
\newblock Learning and convergence to nash in games with continuous action
  sets.
\newblock Technical report, Working paper, 2016.

\bibitem[Bimpikis et~al.(2014)Bimpikis, Ehsani, and Ilkili\c{c}]{Kostas14ec}
Bimpikis, K., Ehsani, S., and Ilkili\c{c}, R.
\newblock Cournot competition in networked markets.
\newblock In \emph{Proceedings of the 15th ACM Conference on Economics and
  Computation (EC 2014)}, pp.\  733, 2014.

\bibitem[Bravo et~al.(2018)Bravo, Leslie, and Mertikopoulos]{bravo2018bandit}
Bravo, M., Leslie, D., and Mertikopoulos, P.
\newblock Bandit learning in concave n-person games.
\newblock In \emph{Proceedings of the 32nd International Conference on Neural
  Information Processing Systems (NeurIPS 2018)}, pp.\  5661--5671, 2018.

\bibitem[Cai et~al.(2019)Cai, Bose, and Wierman]{cai2019role}
Cai, D., Bose, S., and Wierman, A.
\newblock On the role of a market maker in networked cournot competition.
\newblock \emph{Mathematics of Operations Research}, 44\penalty0 (3):\penalty0
  1122--1144, 2019.

\bibitem[Chletsos \& Saiti(2019)Chletsos and Saiti]{chletsos2019hospitals}
Chletsos, M. and Saiti, A.
\newblock Hospitals as suppliers of healthcare services.
\newblock In \emph{Strategic Management and Economics in Health Care}, pp.\
  179--205. Springer, 2019.

\bibitem[Cohen et~al.(2017)Cohen, H{\'e}liou, and
  Mertikopoulos]{cohen2017learning}
Cohen, J., H{\'e}liou, A., and Mertikopoulos, P.
\newblock Learning with bandit feedback in potential games.
\newblock In \emph{Proceedings of the 31st International Conference on Neural
  Information Processing Systems (NeurIPS 2017)}, pp.\  6372--81, 2017.

\bibitem[Cournot(1838)]{cournot1838recherches}
Cournot, A.-A.
\newblock \emph{Recherches sur les principes math{\'e}matiques de la
  th{\'e}orie des richesses par Augustin Cournot}.
\newblock chez L. Hachette, 1838.

\bibitem[Flaxman et~al.(2005)Flaxman, Kalai, Kalai, and
  McMahan]{flaxman2005online}
Flaxman, A.~D., Kalai, A.~T., Kalai, A.~T., and McMahan, H.~B.
\newblock Online convex optimization in the bandit setting: gradient descent
  without a gradient.
\newblock In \emph{Proceedings of the 16th annual ACM-SIAM symposium on
  Discrete algorithms (SODA 2005)}, pp.\  385--394, 2005.

\bibitem[Foster et~al.(2016)Foster, Li, Lykouris, Sridharan, and
  Tardos]{foster2016learning}
Foster, D.~J., Li, Z., Lykouris, T., Sridharan, K., and Tardos, E.
\newblock Learning in games: Robustness of fast convergence.
\newblock In \emph{Proceedings of the 30th International Conference on Neural
  Information Processing Systems (NeurIPS 2016)}, pp.\  4734--4742, 2016.

\bibitem[Friedman \& Weinberg(2014)Friedman and
  Weinberg]{friedman2014economics}
Friedman, J.~H. and Weinberg, D.~H.
\newblock \emph{The economics of housing vouchers}.
\newblock Academic Press, 2014.

\bibitem[Gordon(2007)]{gordon2007no}
Gordon, G.~J.
\newblock No-regret algorithms for online convex programs.
\newblock In \emph{Proceedings of the 21st International Conference on Neural
  Information Processing Systems (NeurIPS 2007)}, pp.\  489--496, 2007.

\bibitem[Hazan et~al.(2016)]{hazan2016introduction}
Hazan, E. et~al.
\newblock \emph{Introduction to online convex optimization}.
\newblock Foundations and Trends in Optimization., 2016.

\bibitem[Ito \& Reguant(2016)Ito and Reguant]{ito2016sequential}
Ito, K. and Reguant, M.
\newblock Sequential markets, market power, and arbitrage.
\newblock \emph{American Economic Review}, 106\penalty0 (7):\penalty0 1921--57,
  2016.

\bibitem[Johari \& Tsitsiklis(2005)Johari and Tsitsiklis]{johari2005efficiency}
Johari, R. and Tsitsiklis, J.~N.
\newblock Efficiency loss in cournot games.
\newblock \emph{Harvard University}, 2005.

\bibitem[Kirschen \& Strbac(2004)Kirschen and Strbac]{kirschen2004fundamentals}
Kirschen, D.~S. and Strbac, G.
\newblock \emph{Fundamentals of power system economics}, volume~1.
\newblock Wiley Online Library, 2004.

\bibitem[Lanctot et~al.(2017)Lanctot, Zambaldi, Gruslys, Lazaridou, Tuyls,
  P{\'e}rolat, Silver, and Graepel]{lanctot2017unified}
Lanctot, M., Zambaldi, V., Gruslys, A., Lazaridou, A., Tuyls, K., P{\'e}rolat,
  J., Silver, D., and Graepel, T.
\newblock A unified game-theoretic approach to multiagent reinforcement
  learning.
\newblock In \emph{Proceedings of the 31st International Conference on Neural
  FInformation Processing Systems (NeurIPS 2017)}, pp.\  4193--4206, 2017.

\bibitem[Li et~al.(2019)Li, Carboni, Gonzalez, Campolo, and
  Burdet]{li2019differential}
Li, Y., Carboni, G., Gonzalez, F., Campolo, D., and Burdet, E.
\newblock Differential game theory for versatile physical human--robot
  interaction.
\newblock \emph{Nature Machine Intelligence}, 1\penalty0 (1):\penalty0 36--43,
  2019.

\bibitem[Marden \& Shamma(2015)Marden and Shamma]{marden2015game}
Marden, J.~R. and Shamma, J.~S.
\newblock Game theory and distributed control.
\newblock In \emph{Handbook of game theory with economic applications}, pp.\
  861--899. Elsevier, 2015.

\bibitem[Marden \& Wierman(2009)Marden and Wierman]{marden2009overcoming}
Marden, J.~R. and Wierman, A.
\newblock Overcoming limitations of game-theoretic distributed control.
\newblock In \emph{Proceedings of the 48h IEEE Conference on Decision and
  Control (CDC 2009)}, pp.\  6466--6471, 2009.

\bibitem[Mertikopoulos \& Zhou(2019)Mertikopoulos and
  Zhou]{mertikopoulos2019learning}
Mertikopoulos, P. and Zhou, Z.
\newblock Learning in games with continuous action sets and unknown payoff
  functions.
\newblock \emph{Mathematical Programming}, 173\penalty0 (1-2):\penalty0
  465--507, 2019.

\bibitem[Milgrom \& Roberts(1990)Milgrom and Roberts]{Milgrom1990}
Milgrom, P. and Roberts, J.
\newblock Rationalizability, learning, and equilibrium in games with strategic
  complementarities.
\newblock \emph{Econometrica}, 58\penalty0 (6):\penalty0 1255–77, 1990.

\bibitem[Nadav \& Piliouras(2010)Nadav and Piliouras]{nadav2010no}
Nadav, U. and Piliouras, G.
\newblock No regret learning in oligopolies: cournot vs. bertrand.
\newblock In \emph{Proceedings of the 11th ACM Conference on Economics and
  Computation (EC 2010)}, pp.\  300--311, 2010.

\bibitem[Perkins et~al.(2017)Perkins, Mertikopoulos, and
  Leslie]{perkins2017mixed}
Perkins, S., Mertikopoulos, P., and Leslie, D.
\newblock Mixed-strategy learning with continuous action sets.
\newblock \emph{IEEE Transactions on Automatic Control}, 62\penalty0
  (1):\penalty0 379--384, 2017.

\bibitem[Phade \& Anantharam(2019)Phade and Anantharam]{phade2019optimal}
Phade, S. and Anantharam, V.
\newblock Optimal resource allocation over networks via lottery-based
  mechanisms.
\newblock In \emph{International Conference on Game Theory for Networks
  (Gamenets 2019)}, pp.\  51--70. Springer, 2019.

\bibitem[Rosen(1965)]{rosen1965existence}
Rosen, J.~B.
\newblock Existence and uniqueness of equilibrium points for concave n-person
  games.
\newblock \emph{Econometrica}, pp.\  520--534, 1965.

\bibitem[Roughgarden(2016)]{roughgarden2016twenty}
Roughgarden, T.
\newblock \emph{Twenty lectures on algorithmic game theory}.
\newblock Cambridge University Press, 2016.

\bibitem[Serrino et~al.(2019)Serrino, Kleiman-Weiner, Parkes, and
  Tenenbaum]{serrino2019finding}
Serrino, J., Kleiman-Weiner, M., Parkes, D.~C., and Tenenbaum, J.
\newblock Finding friend and foe in multi-agent games.
\newblock In \emph{Proceedings of the 33rd International Conference on Neural
  Information Processing Systems (NeurIPS 2019)}, pp.\  1249--1259, 2019.

\bibitem[Shalev-Shwartz(2007)]{shalev2007online}
Shalev-Shwartz, S.
\newblock \emph{Online learning: Theory, algorithms, and applications}.
\newblock PhD thesis, Hebrew University of Jerusalem, 2007.

\bibitem[Stromberg(2015)]{stromberg2015introduction}
Stromberg, K.~R.
\newblock \emph{An introduction to classical real analysis}, volume 376.
\newblock American Mathematical Society, 2015.

\bibitem[Syrgkanis et~al.(2015)Syrgkanis, Agarwal, Luo, and
  Schapire]{syrgkanis2015fast}
Syrgkanis, V., Agarwal, A., Luo, H., and Schapire, R.~E.
\newblock Fast convergence of regularized learning in games.
\newblock In \emph{Proceedings of the 28th International Conference on Neural
  Information Processing Systems (NeurIPS 2015)}, pp.\  2989--2997, 2015.

\bibitem[Szidarovszky \& Yakowitz(1977)Szidarovszky and
  Yakowitz]{szidarovszky1977new}
Szidarovszky, F. and Yakowitz, S.
\newblock A new proof of the existence and uniqueness of the cournot
  equilibrium.
\newblock \emph{International Economic Review}, pp.\  787--789, 1977.

\bibitem[Yao et~al.(2008)Yao, Adler, and Oren]{yao2008modeling}
Yao, J., Adler, I., and Oren, S.~S.
\newblock Modeling and computing two-settlement oligopolistic equilibrium in a
  congested electricity network.
\newblock \emph{Operations Research}, 56\penalty0 (1):\penalty0 34--47, 2008.

\bibitem[Zhang et~al.(2015)Zhang, Johari, and Rajagopal]{zhang2015competition}
Zhang, B., Johari, R., and Rajagopal, R.
\newblock Competition and coalition formation of renewable power producers.
\newblock \emph{IEEE Transactions on Power Systems}, 30\penalty0 (3):\penalty0
  1624--1632, 2015.

\bibitem[Zhou et~al.(2017)Zhou, Mertikopoulos, Moustakas, Bambos, and
  Glynn]{zhou2017mirror}
Zhou, Z., Mertikopoulos, P., Moustakas, A.~L., Bambos, N., and Glynn, P.
\newblock Mirror descent learning in continuous games.
\newblock In \emph{2017 IEEE 56th Annual Conference on Decision and Control
  (CDC 2017)}, pp.\  5776--5783, 2017.

\bibitem[Zhou et~al.(2018)Zhou, Mertikopoulos, Athey, Bambos, Glynn, and
  Ye]{zhou2018learning}
Zhou, Z., Mertikopoulos, P., Athey, S., Bambos, N., Glynn, P., and Ye, Y.
\newblock Learning in games with lossy feedback.
\newblock In \emph{Proceedings of the 32nd International Conference on Neural
  Information Processing Systems (NeurIPS 2018)}, pp.\  5140--5150, 2018.

\end{thebibliography}
\bibliographystyle{icml2020}

\newpage
\section*{Appendix A}
\subsection*{A1. Review of FKM}
FKM~\cite{flaxman2005online} is a well-known zeroth-order no-regret algorithm under the single function evaluation situation, which is also known as ``gradient descent without a gradient''. 

The pseduocode of FKM is given in Algorithm~\ref{alg:FKM} (reproduced from~\cite{hazan2016introduction}).
\begin{algorithm}[ht]
	\caption{FKM~\cite{hazan2016introduction}}
	\label{alg:FKM}
	\begin{algorithmic}
		\State {\bfseries Input:} decision set $\mathcal{X}$, parameters $\delta$, $\eta$.
		\State Pick $\mathbf{y}_1 = \mathbf{0}$ (or arbitrarily).
		\For{$t=1, 2, ..., T$}
		\State Draw $\mathbf{u}_t \in S_d$ uniformly at random, and set $\mathbf{x}_t = \mathbf{y}_t + \delta \mathbf{u}_t$.
		\State Play $\mathbf{x}_t$ suffer loss $f_t(\mathbf{x}_t)$.
		\State Calculate $\mathbf{g}_t = \frac{n}{\delta} f_t(\mathbf{x}_t) \mathbf{u}_t$.
		\State Update $\mathbf{y}_{t+1} = \prod_{\mathcal{X}_\delta}[\mathbf{y}_t-\eta \mathbf{g}_t]$.
		\EndFor
	\end{algorithmic}
\end{algorithm}


\subsection*{A2. Review of Online Mirror Descent}
Online mirror descent (OMD)~\cite{shalev2007online} is a widely adopted first-order no-regret algorithm, which has been extensively studied under the learning in games setting~\cite{zhou2017mirror,zhou2018learning,mertikopoulos2019learning}. 

The pseudocode of OMD is provided in Algorithm~\ref{alg:OMD} (reproduced from~\cite{hazan2016introduction}).
\begin{algorithm}[ht]
	\caption{Online Mirror Descent with Quadratic Regularization~\cite{hazan2016introduction}}
	\label{alg:OMD}
	\begin{algorithmic}
		\State {\bfseries Input:} decision set $\mathcal{X}$, parameter $\eta >0$, regularization function $R(x)=\frac{1}{2}||\mathbf{x}||_2^2$ which are strongly convex and smooth.
		\State Pick $\mathbf{y}_1 = \mathbf{0}$ (or arbitrarily) and $\mathbf{x}_1 = \arg \min_{\mathbf{x} \in \mathcal{X}} ||\mathbf{y}_1 -\mathbf{x}||_2^2$.
		\For{$t=1, 2, ..., T$}
		\State Play $\mathbf{x}_t$.
		\State Observe the payoff function gradient $\nabla f_t(\mathbf{x}_t)$.
		\State Update $\mathbf{y}_t$ according to the rule:
		$$\text{[Lazy version]} \ \ \ \ \mathbf{y}_{t+1} = \mathbf{y}_{t} - \eta \nabla f_t(\mathbf{x}_t) $$
		$$\text{[Agile version]} \ \ \ \ \mathbf{y}_{t+1} = \mathbf{x}_{t} - \eta \nabla f_t(\mathbf{x}_t) $$
		\State Project to feasible set:
		$$\mathbf{x}_{t+1} = \arg \min_{\mathbf{x} \in \mathcal{X}} ||\mathbf{y}_{t+1} -\mathbf{x}||_2^2$$
		\EndFor
	\end{algorithmic}
\end{algorithm}

\subsection*{A3. Implementation Details}
For the FKM implementation, we have $\mathcal{X} = \mathbb{R}^{+}$ as the feasible action set. We set $\eta_t = \frac{\eta_0}{(0.1t)^{3/4}}$, $\delta = \frac{\delta_0}{t^{1/3}}$, and $\eta_0 = 0.05, \delta_0 = 1$ for all experiments.
For the OMD implementation, we use the agile version update and quadratic regularization. Similarly, the action feasible set is $\mathcal{X} = \mathbb{R}^{+}$. We set $\eta = \frac{1}{2\sqrt{T}}$ as suggested in~\cite{hazan2016introduction} Theorem 5.6 and $T=1000$.
All experiments were run on a MacBook Pro with 16 GB 2400 MHz DDR4 memory, and a 2.2GHz Intel Core i7 CPU.

\section*{Appendix B}
\subsection*{Proof of Proposition 1}
\begin{proposition_noindex}[Restatement of Proposition and Assumptions]
	A Cournot game satisfying the following assumptions:
	\begin{enumerate}
	\item[1)] The market price function $p$ is concave, strictly decreasing, and twice differentiable on $[0,\ymax]$, where $\ymax$ is the first point where $p$ becomes $0$. For $y>\ymax$, $p(y)=0$. In addition, $p(0)>0$. \hfill (A1)
	\item[2)] The individual cost function $C_i(x_i)$ is convex, strictly increasing, and twice differentiable, with $p(0)>C_i'(0)$, for all $i$. \hfill (A2)
   \end{enumerate}
	has exactly one Nash equilibrium.
\end{proposition_noindex}
\begin{proof}
	Define the total production $s=\sum_{i=1}^{n}x_i$. For each player $i$, and each $s \geq 0$, define
	\begin{align*}
	x_i(s) &= \begin{cases} x, & \text{such that  } x \geq 0 \text{ and } p(s) = C_i'(x) - xp'(s).\\
	0, &\text{if no such exists}. \end{cases}
	\end{align*}
	Note that $x_i(s)$ is monotone decreasing in $s$ and $x_i(s)$ is continuous in $s$. It is now shown that there is a unique non-negative number $s^{*}$ such that,
	\begin{equation}
	\sum x_i(s^{*}) = s^{*}.
	\end{equation}
	For $\sum x_i(0) \geq 0$ and by the positivity of $C_i'$, $-p'$, $\sum x_i(\xi) = 0 < \xi$ for any $\xi$ such that $f(\xi) = 0$. As $x(s) = \sum_{i=1}^{N} x_i(s)$ is continuous and strictly decreasing for any $s$ such that $x(s) >0$, there must be exactly one $s^{*}$ for which $x(s^{*}) = s^{*}$. By the definition of $x_i(s)$, each $x_i(s)$ maximizes $\pi_i(x_1, ..., x_N) = p(\sum_{j=1}^{N} x_j)x_i-C_i(x_i)$; therefore, $x(s^{*}) = (x_1(s^{*}), ..., x_N(s^{*}))$ is an equilibrium point of the model, and no other point can be an equilibrium point. The above proof is reproduced from ~\cite{szidarovszky1977new} for reader's easy reference.
\end{proof}
\end{document}